\newtheorem{thm}{Theorem}[section]
\newtheorem{cor}[thm]{Corollary}
\newtheorem{lem}[thm]{Lemma}
\newtheorem{note}{Remark}
\newcommand{\bit}{\begin{itemize}}
\newcommand{\eit}{\end{itemize}}
\newcommand{\bcor}{\begin{cor}}
	\newcommand{\ecor}{\end{cor}}
\begin{document}

\title{Updating Content in Cache-Aided Coded Multicast}
\author{Milad Mahdian, N. Prakash, Muriel M{\'{e}}dard and Edmund Yeh
	\thanks{Milad Mahdian is an Associate in Technology at Goldman Sachs Group, Inc. N. Prakash, and Muriel M{\'{e}}dard are with the Research Laboratory of Electronics, Massachusetts Institute of Technology (email: \{prakashn, medard\}@mit.edu).  Edmund Yeh is with Department of Electrical and Computer Engineering, Northeastern University (email: eyeh@ece.neu.edu). The work of N. Prakash is in part supported by the AFOSR award  FA9550-14-1-0403. Edmund Yeh gratefully acknowledges support from National Science Foundation Grant CNS-1423250 and a Cisco Systems Research Grant.}}	

\maketitle

\begin{abstract}
	Motivated by applications to delivery of dynamically updated, but correlated data in settings such as content distribution networks, and distributed file sharing systems, we study a single source multiple destination network coded multicast problem in a cache-aided network. We focus on models where the caches are primarily located near the destinations, and where the source has no cache. The source observes a sequence of correlated frames, and is expected to do frame-by-frame encoding with no access to prior frames. We present a novel scheme that shows how the caches can be advantageously used to decrease the overall cost of multicast, even though the source encodes without access to past data. Our cache design and update scheme works with any choice of network code designed for a corresponding cache-less network, is largely decentralized, and works for an arbitrary network. We study a convex relation of the optimization problem that results form the overall cost function. The results of the optimization problem determines the rate allocation and caching strategies. Numerous simulation  results are presented to substantiate the theory developed.
\end{abstract}

\section{Introduction}

In this paper, we study a single source multiple destination network coded multicast problem in a cache-aided network, where the source observes a sequence of correlated frames. The source encodes each frame individually and only has the access to the current frame, and knowledge of the correlation model while encoding. The source does not have any cache to store old frames. We consider transmission over an arbitrary network, where  a certain subset of nodes (excluding the source) has access to local caches with a certain cost model. The cache is simply a local storage box that can be used to store a part of previously received packets, and is used while generating the output coded packets for the future rounds. Each node has the flexibility to update its cache after each round. We are interested in devising strategies for source encoding, intermediate node processing including caching, and destination nodes' processing, including decoding and caching such that the overall communication and caching costs are optimized. 

The setting, in one direction, is a natural generalization of the standard coded multicast problem studied in ~\cite{Lun}, with the main difference being the presence of caches in a subset of the nodes. Instantaneous encoding (frame by frame) is an important requirement in delay sensitive applications such as video conferencing, online gaming. Network coding is widely recognized as  a superior alternative to routing in delay sensitive applications~\cite{chen2017delay}, \cite{chou2003practical}, \cite{gkantsidis2005network}, \cite{llorca2013network}. Network coding \cite{fragouli2007network}, \cite{li2003linear} specifically, random linear network coding~\cite{rlnc} avoids the need for dynamically constructing multicast trees based on latency tables as needed for multicast routing, and permits a decentralized mode of operation making optimal use of network resources. 

Caching for content distribution in networks is an active area of research with several works studying the cache placement and delivery protocols~\cite{maddah2014fundamental, maddah2015decentralized, niesen2017coded, park2017coded, hassanzadeh2016cache, yang2017centralized}. In the traditional caching settings, users are often interested in a specific subset of the popular files. With limited cache sizes, the question of interest is what to store in the various caches such that as many users' requests can be served while minimizing download from the backend (a distant cloud). We note that in this work, we are exploring a slightly different problem that arises in the context of caching. Firstly, we restrict ourself to multicast networks, in which case, the problem of cache placements as studied in the above works becomes trivial (since all destinations are interested in the same message). Secondly, our  focus is on identifying efficient schemes for updating the content of the distant caches over a network when there is limited cache available at the source itself. We also note that the above works pay little attention to the cache placement cost or updating the caches once content changes, which is the key focus of this work. Furthermore, in the above works, the network topologies considered are simplistic, like  a direct shared link from the edge caches to the backend or tree topology. In this context, our problem can be considered as one of efficient cache placement for multicasting in arbitrary networks. 

We note that presence of a cache at the source means that the source can automatically compute the differences across frames, and only need to transmit the difference frames. In our model, the assumption of lack of cache at the source is motivated more by a practical point of view rather than the need to construct a theoretical problem. This is because, in content distribution network, it is easy and efficient to place caches closer to the destinations than deeper within the network.  

Our model is also applicable in distributed file storage systems shared by several users who want to update 
the stored content. Distributed storage systems like those used in Apache Cassandra, Amazon Dynamo DB, often replicate every user file across several servers in the network, and allow these files to be updated by one of several users. For updating a file, any of the users contact a proxy client at the edge of the network, and this proxy client then takes the job of multicasting the user file to the target replica servers. When compared to our setting, the replica servers act as the destinations, and have caches. The proxy client acts as the source. The sequences of frames multicast by the source correspond to the various file versions updated by various users. Systems like Apache Cassandra, Amazon Dynamo DB handle several hundred thousands of files, any proxy client is used for updating several of these files. It is inconceivable from a storage cost point of view for the proxy client at the edge to cache the latest version of every file that it ever sees. In this context, the proxy client is best modeled via a source which only has access to the current frame that it wants to multicast.

In this work, we present a novel scheme for the cache-aided multicast problem that works with any choice of network code designed for a corresponding network without any caches. Given the network code for the corresponding network without caches, our solution for cache design is largely decentralized. The only coordination needed is for any node to inform its incoming neighbors about the coding coefficients, and whether caching is used or not. Our solution works for an arbitrary multicast network, and is capable of utilizing presence of caches at any of the nodes in the network. Finally, our solution for cache design is one that permits an overall cost optimization study as in \cite{Lun}. The result of the optimization determines rate allocation for each link, and also whether a cache is to be used or not (assuming it is available).

In the rest of the introduction, we present our system model, summary of contributions and other related works.

\subsection{System Model} \label{sec:sys}

We consider the problem of multicasting in a directed acyclic network having a single source and $L, L \geq 1$ destinations. We let $\mathcal{N} = \{V_1, \ldots, V_N\}$ to denote the nodes in the network.  Without loss of generality, we let $V_1$ to denote the source node, and $V_{N-L+1}, \ldots, V_N$ to denote the $L$ destination nodes. { The remaining nodes $\{V_2, \ldots, V_{N-L}\}$ are intermediary nodes in the network. The source node generates data, the destinations consume data, and the intermediary nodes simply transmit, to the various outgoing links, an appropriate function of data received thus far. Example networks appear in Fig. \ref{fig:topos}; for instance in Fig. \ref{topo:butterfly}, there are two destinations, and four intermediary nodes in addition to the source node.}  The source node shall also be denoted as $S$, and the destination nodes by $D_1, \ldots, D_L$ with $V_{N - L + i} = D_i, 1 \leq i \leq L$. We assume node $V_i$ to have ${\alpha}_i$ incoming edges and $\beta_i$ outgoing edges.  For any node $V_i \in \mathcal{N}$, we refer to $i$ as the index of the node. For any node $\mathcal{V}_i$, we write $\mathcal{N}_{in}(i)$ and $\mathcal{N}_{out}(i)$ to respectively denote the indices of nodes corresponding to the incoming and outgoing edges of $V_i$. The source has no incoming edge, and the destinations have no outgoing edge. We assume that there is at most one edge between any two nodes. We use $\mathcal{E}$ to denote the set of all edges in the network. Any element of $\mathcal{E}$ is of the form $(i, j)$, and will indicate the presence of a directed edge from $V_i$ to $V_j$. We also write $V_i \rightarrow V_j$ to indicate an edge $(i, j) \in \mathcal{E}$.

The source is expected to multicast a sequence of $M$ frames ${\bf m}^{(1)}, \ldots, {\bf m}^{(M)}, {\bf m}^{(i)} \in \mathbb{F}_q^B$ to the $L$ destinations in $M$ rounds. { Here $\mathbb{F}_q$ denotes the finite field of $q$ elements, and $\mathbb{F}_q^B$ denotes the $B$-dimensional vector space over $\mathbb{F}_q$. The elements of $\mathbb{F}_q^B$ are assumed to be column vectors.} In round $i, 1\leq i \leq M$, the source has access to (only) ${\bf m}^{(i)}$ during the encoding process. The frames are correlated; specifically, we assume that any two adjacent frames differ in at most $\epsilon$ symbols, i.e., Hamming weight$({\bf m}^{(i+1)} - {\bf m}^{(i)}) \leq \epsilon, 1 \leq i \leq M-1$.  We are interested in zero-error decoding at all the $L$ destinations. Further, we assume a worst-case scenario model, where the scheme must work for every possible sequence of the input frames that respect the correlation model described above.

Each node, except the source node, has access to local storage that can be used to cache (coded) content from previous rounds. The data encoded by a node $V \in \mathcal{N}$ during round $i$ is a function of its incoming data during the $i^{\text{th}}$ round, and the cache of node $V$ after round $i-1$. We assume that all caches to be empty, initially. The cache content of node $V$ after round $i$ is also possibly updated; if so, this is a function of its incoming data during the $i^{\text{th}}$ round and the cache of node $N$ after round $i-1$. We associate the cost function $f_i:\mathbb{N}\rightarrow \mathbb{R}$ with each cache of node $V_i \in \mathcal{N}$,  where $f_i(s), s \geq 1$ denotes the cost to store $s$ bits for one round. Since we calculate cost per bits cached, the cost function definition is not tied to the specific finite field $\mathbb{F}_q$ that is used to define the source alphabet.

We further associate the cost function $f_{i,j}:\mathbb{N}\rightarrow \mathbb{R}$ with the directed edge from node $V_i$ to node $V_j$. The quantity $f_{i,j}(s), s\geq 1$ shall denote the cost to transmit, in a single round, a packet of size $s$ bits along the directed edge $V_i \rightarrow V_j$.  If the directed edge $V_i \rightarrow V_j$ does not exist in the network, we simply assume $f_{i,j}(s) = \infty, s\geq 1$.

\subsection{Summary of Contributions} \label{sec:problem}

We are interested in devising strategies for source encoding, intermediate node processing including caching, and destination node processing, including decoding and caching such that the overall communication and storage costs are optimized. 

We present a novel achievable scheme for the above problem that takes advantage of the correlation among the source frames. We show that even though the source does not have any local storage, it is still possible to encode at the source and the various intermediate nodes in a way that takes advantage of the local storage capabilities at the various intermediate nodes. In the special case where the cost due to storage is negligible when compared to cost due to communication, the overall cost - which is simply the cost due to communication - is significantly better than any scheme that does not utilize the caches.
The overall cost computation is posed as an optimization problem such that any solution to this optimization problem can be mapped to an instance of the achievable scheme that we present here. We also present explicit and numerical solutions to the optimization problem for three different network topologies in order to demonstrate the benefits of our scheme.

In our achievable scheme, cache of any node is used with an aim to decrease the communication cost on the incoming links on that node.  Our achievable scheme relies on the work in \cite{PraMed}, where the authors present a coding scheme for updating linear functions. We review the relevant results in the next section, before presenting our achievable scheme  in Section \ref{sec:scheme}. We also discuss two straightforward variations of the setting in \cite{PraMed}, including new achievability results for the variations, that are needed for constructing an achievable scheme for our problem. 
\vspace{0.1in}

\begin{note}
	An approach where cache of a node is used to decrease the communication cost on the outgoing links is also conceivable; for example if the source itself has unlimited cache, {the source can simply cache ${\bf m}^{(i)}$ during round $i$, and then multicast the difference vector ${\bf m}^{(i+1)} - {\bf m}^{(i)}$  in round $(i+1)$. This simple solution works as long as the all the destinations have the ability to cache ${\bf m}^{(i)}$ during round $i$. } Further, even if the source does not have any cache like in our model, the source can divide the frame to be sent to several sub-frames and perform multiple-unicast transmissions to a bunch of nodes that have caches whose total cache size exceeds the size of the frame. Each of these nodes, having access to the sub-frame from the past rounds, computes the difference for the sub-frame and multicasts the sub-frame difference to the destinations. While the strategy is reasonable in situations when there are a few large-sized-cache nodes near the source, and caches at the destinations, its merits are not immediately clear when the cache sizes are only a fraction of the frame size $B$, and if the nodes with such caches are distributed through out the network. A study of approaches along this direction is left for future work. Furthermore, in this work, our focus is on networks where the communication cost of the incoming links of a node is positively correlated with the size of its cache. In such networks, it is intuitive to consider schemes (as discussed in this work) where the cache of a node is used to minimize the communication cost on the incoming links of the node. 
\end{note}

\subsection{Other Related Work}\label{sec:related}

\subsubsection{Network Coding for Multicasting, Optimization Problems}

Cost minimization for multicast, either based on routing or network coding, involves assigning costs to each links and minimizing overall cost of multicast. These costs usually characterize bandwidth costs.  For example, see works \cite{Lun}. The cost can also represent delay of the link, and this can be a secondary cost metric in addition to the primary bandwidth cost metric.  For example, while formulating the optimization problem, one can impose a QoS requirement by restricting the maximum number of hops in a path from source to destination, and thus addressing the delay requirement. Works that address delay requirements for multicast appear in ~\cite{chen2017delay, cheng2017delay}.The authors of \cite{chen2017delay} propose a delay sensitive  multicast protocol for MANETs. The protocol is based on routing packets, and does not use either  network coding or caching. In \cite{cheng2017delay}, the authors propose new methods for monitoring and updating latency information of links for delay sensitive multicast routing.

\subsubsection{Caching for Serving Popular Demand}

As noted above there are several works in the literature that study the problem of cache placement for serving popular data~\cite{maddah2014fundamental, maddah2015decentralized, niesen2017coded}. It is an interesting problem to explore how the solution in this work can be used to update caches when using schemes are in ~\cite{maddah2014fundamental, maddah2015decentralized, niesen2017coded} for non-multicast demands. The setting of \cite{park2017coded} is related in which the authors study a setting where there are edge caches, users requests are served based on the content of the caches and a coded multi-cast packet that is obtained from the backend. The back-end is assumed to connected to the various destinations via a network. Thus while emphasis is placed on minimizing delivery cost over an arbitrary network via network coding, the cost of cache placement is ignored, and also the setting does not consider the problem of updating caches.

Recently, there have been efforts to generalize the setting of edge caching to the setting where there is correlation among the various files~\cite{hassanzadeh2016cache, yang2017centralized}. The traditional setup in \cite{maddah2014fundamental} assumes that the various files at the distant source are independent. In the generalized setting, a correlation model is assumed to capture file dependencies. Both placement and the delivery phases are designed accounting for this correlation. We note that this generalization of the caching problem is one step closer to our model than the model of  \cite{maddah2014fundamental}. In our model, we also assume  correlation among the various packets that need to be multicast to the destinations. It is an interesting future problem to see if our technique in this work can be used to provide alternate solutions to the delivery phase of a setting as in  \cite{hassanzadeh2016cache}, perhaps by restricting to a correlation model as in this work.

The rest of the document is organized as follows. In Section \ref{sec:updates}, we review the necessary results from \cite{PraMed} for function updates for the point-to-point setting. We also generalize some of these results towards constructing a scheme for our problem. Our scheme for the multicast problem will be then presented in Section \ref{sec:scheme}. The expression for the overall cost associated with the scheme will be obtained in Section \ref{sec:opt}, and posed an optimization problem. A convex relaxation is then presented, whose solution is obtained via primal-dual algorithms. We also show how to relate solutions to the original problem from the solutions of the relaxed problem. In Section \ref{sec:exp}, we present simulation results of our scheme for three networks namely the butterfly network, service network and the CDN  network. For each of these networks, we show how caching at or near the edge improves the cost significantly when compared to the case of no caching. We  present results for various caching cost metrics like no caching cost, linear or quadratic caching costs. 

\section{Coding for Function Updates} \label{sec:updates}

In \cite{PraMed}, the authors consider a point-to-point communication problem in which an update of the source message needs to be communicated to a destination that is interested in maintaining a linear function of the actual source message. In their problem, the updates are sparse in nature, and where neither the source nor receivers are aware of the difference-vector.  The work proposes encoding and decoding schemes that allow the receiver to update itself to the desired function of the new source message, such that the communication cost is minimized.  

The ``coding for function updates" model used in \cite{PraMed} that is relevant to the work here, is shown in Fig. \ref{fig:coding_updates}. The $B$-length column-vector ${\bf x} \in \mathbb{F}_q^B$ denotes the initial source message. The receiver maintains the linear function $A{\bf x}$, where $A$ is a $\theta \times B$ matrix, $\theta \leq B$, over $\mathbb{F}_q$. The updated source message is given by ${\bf x} + {\bf e}$, where ${\bf e}$ denotes the difference-vector, and is such that $\text{Hamming wt.}({\bf e}) \leq \epsilon$, $0 \leq \epsilon \leq B$. Encoding at the source is linear and is carried out using the $\gamma \times B$ matrix $H$.  The source is aware of the function $A$ and the parameter $\epsilon$, but does not know the vector ${\bf e}$. Encoding at the source is such that the receiver can update itself to $A({\bf x}  + {\bf e})$, given the source encoding and $A{\bf x}$.  Assuming that the parameters $n, q, \epsilon$ and $A$ are fixed, the communication cost of this problem is defined as the parameter $\gamma$, which is the number of symbols generated by the linear encoder $H$. The authors in \cite{PraMed} assume a zero-probability-of-error, worst-case-scenario model; in other words, the scheme allows error-free decoding for every ${\bf x} \in \mathbb{F}_q^B$ and $\epsilon$-sparse ${\bf e} \in \mathbb{F}_q^B$. 

\begin{figure}[h]
	\centering
	\includegraphics[width=3in]{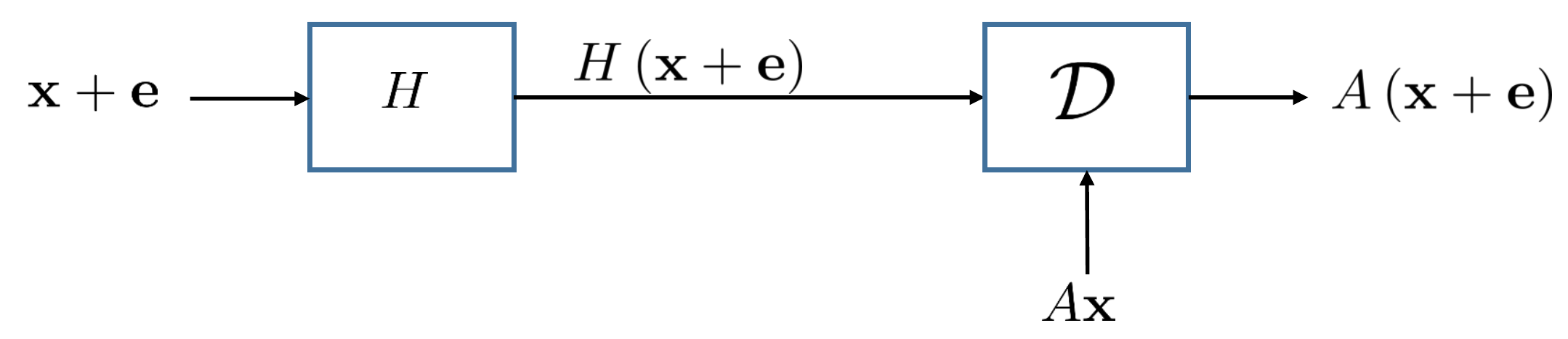}
	\caption{The model used in \cite{PraMed} for linear function updates.}
	\label{fig:coding_updates}
\end{figure}

The scheme in \cite{PraMed} is one which has optimal communication cost $\gamma$. The authors show an achievability and a converse for the above setting. In our problem, we are only interested in the achievability part of their result. We state this explicitly in the following lemma. 

\vspace{0.1in}

\begin{lem}[\cite{PraMed}] \label{lem:func_updates}
For the function-updates problem in Fig.\ref{fig:coding_updates}, there exists a linear encoder $H$ of the form $H = SA, S \in \mathbb{F}_q^{\gamma \times \theta}$, and a decoder $\mathcal{D}$ such that the output of the decoder is $A({\bf x}  + {\bf e})$ for all ${\bf x}, {\bf e} \in \mathbb{F}_q^B$, Hamming wt.$({\bf e}) \leq \epsilon$, whenever $q \geq 2\epsilon B^{2\epsilon}$ and $\gamma \geq \min(2\epsilon, \text{rank}(A))$. 
\end{lem}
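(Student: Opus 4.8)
The plan is to reduce the function-update problem to a ``sparse difference recovery'' problem and then to establish existence of a suitable $S$ by a random-coding argument. Since the source transmits $H(\mathbf{x}+\mathbf{e}) = SA(\mathbf{x}+\mathbf{e})$ and the receiver already holds $A\mathbf{x}$, the receiver can form $S(A\mathbf{x})=SA\mathbf{x}$ and subtract it from the received vector to obtain $SA\mathbf{e}$. Because $A(\mathbf{x}+\mathbf{e}) = A\mathbf{x} + A\mathbf{e}$ and $A\mathbf{x}$ is already known, the decoding task is exactly to recover $A\mathbf{e}\in\mathbb{F}_q^\theta$ from the shorter vector $SA\mathbf{e}\in\mathbb{F}_q^\gamma$, uniformly over all $\epsilon$-sparse $\mathbf{e}$. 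Thus it suffices to choose $S$ so that the map $\mathbf{e}\mapsto SA\mathbf{e}$, restricted to $\epsilon$-sparse inputs, determines $A\mathbf{e}$; the decoder $\mathcal{D}$ then outputs $A\mathbf{x}+A\hat{\mathbf{e}}$ for any $\epsilon$-sparse $\hat{\mathbf{e}}$ consistent with the observed $SA\mathbf{e}$.

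Next I would translate ``$SA\mathbf{e}$ determines $A\mathbf{e}$'' into a clean condition on $S$. If two $\epsilon$-sparse vectors $\mathbf{e}_1,\mathbf{e}_2$ satisfy $SA\mathbf{e}_1=SA\mathbf{e}_2$, then $\mathbf{e}'=\mathbf{e}_1-\mathbf{e}_2$ has Hamming weight at most $2\epsilon$ and $SA\mathbf{e}'=0$, and we must guarantee $A\mathbf{e}'=0$. Writing $T=\mathrm{supp}(\mathbf{e}')$ and letting $A_T$ be the submatrix of columns of $A$ indexed by $T$, the requirement becomes: for every $T\subseteq[B]$ with $|T|\le 2\epsilon$, the matrix $S$ is injective on the column space $\mathrm{colspace}(A_T)$. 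Since $\dim\mathrm{colspace}(A_T)\le\min(|T|,\mathrm{rank}(A))\le\min(2\epsilon,\mathrm{rank}(A))$, injectivity on each such subspace is dimensionally compatible with, and in the worst case forces, $\gamma\ge\min(2\epsilon,\mathrm{rank}(A))$, which is precisely the stated rate condition. Fixing a basis matrix $U_T\in\mathbb{F}_q^{\theta\times d_T}$ of $\mathrm{colspace}(A_T)$ with $d_T=\dim\mathrm{colspace}(A_T)$, the condition ``$S$ injective on $\mathrm{colspace}(A_T)$'' is equivalent to ``$SU_T$ has full column rank $d_T$.''

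The crux is then to exhibit a single $S$ that is simultaneously good for all relevant supports $T$, and this is where the field-size hypothesis enters. I would draw $S$ uniformly from $\mathbb{F}_q^{\gamma\times\theta}$. Because $U_T$ has full column rank, the product $SU_T$ is uniformly distributed over $\mathbb{F}_q^{\gamma\times d_T}$, so the probability that $SU_T$ is rank-deficient is at most $\sum_{k=0}^{d_T-1} q^{\,k-\gamma} < 1/(q-1)$ whenever $d_T\le\gamma$. Noting that it suffices to enforce the condition on the maximal supports (of size exactly $\min(2\epsilon,B)$), there are at most $\binom{B}{2\epsilon}\le B^{2\epsilon}$ constraints, so a union bound makes the probability of any bad $T$ strictly less than $1$ as soon as $q-1>B^{2\epsilon}$, which holds comfortably under $q\ge 2\epsilon B^{2\epsilon}$; hence a good $S$ exists. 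Equivalently, one may treat the entries of $S$ as indeterminates, form the product over all $T$ of one $d_T\times d_T$ minor of $SU_T$, observe that this polynomial is not identically zero since each factor is nonzero when $\gamma\ge d_T$, bound its total degree by $2\epsilon\binom{B}{2\epsilon}\le 2\epsilon B^{2\epsilon}$, and apply Schwartz--Zippel to find an assignment over $\mathbb{F}_q$ avoiding all zeros once $q$ exceeds this degree.

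I expect the reduction and the per-support injectivity condition to be routine linear algebra. The main obstacle is the simultaneous-existence step: controlling the exponentially many support constraints at once and verifying that the threshold $q\ge 2\epsilon B^{2\epsilon}$ clears the union bound (or the minor-product degree). The delicate part is the bookkeeping here --- counting the maximal supports, bounding the rank-deficiency probability, and confirming that the stated field size suffices --- although no individual estimate is hard.
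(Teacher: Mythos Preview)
Your argument is correct. The reduction to ``$S$ injective on $\mathrm{colspace}(A_T)$ for every $T$ of size at most $2\epsilon$'' is exactly the right identifiability condition, and both the random-matrix union bound and the Schwartz--Zippel minor-product argument you sketch are valid ways to produce the required $S$ under the stated field-size and rate hypotheses. The bookkeeping you flag (restricting to maximal supports, bounding $\binom{B}{2\epsilon}$ by $B^{2\epsilon}$, and checking that the per-$T$ failure probability is at most $1/(q-1)$) is routine and goes through.

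As for comparison with the paper: there is nothing to compare. The paper does not prove this lemma; its proof reads in full ``See Section III, \cite{PraMed}.'' The result is imported wholesale from that reference. That said, the specific form of the field-size threshold $q\ge 2\epsilon B^{2\epsilon}$ is a strong hint that the original proof in \cite{PraMed} proceeds via a polynomial-identity argument of total degree at most $2\epsilon\binom{B}{2\epsilon}$, i.e., essentially the Schwartz--Zippel route you describe. So your proposal is almost certainly aligned with the cited proof, and in any case it stands on its own.
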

\begin{proof}
See Section III, \cite{PraMed}.	
\end{proof}

In the above lemma, the quantity $\mathbb{F}_q^{\gamma \times \theta}$ denotes the set of all $\gamma \times \theta$ matrices whose elements are drawn from $\mathbb{F}_q$.

\subsection{Variations of the Function Updates Problem Needed in this Work}

We now present two variations of the function-updates problem (see Fig.\ref{fig:coding_updates}) that we need to construct an achievable scheme for our problem in this paper. For both variations, we are interested in achievable schemes, and these follow directly from Lemma \ref{lem:func_updates}. 

In the first variation (see Fig. \ref{fig:coding_updates_var1}), the source input is changed to $P({\bf x}  + {\bf e})$ (instead of {\bf x}  + {\bf e}), where $P$ is a $\theta' \times B$ matrix such that $A = CP$ for some $\theta \times \theta'$ matrix  $C$. The following lemma guarantees an achievable scheme for this variation.

\vspace{0.1in}

\begin{lem}[\cite{PraMed}] \label{lem:func_updates_var1}
	For the function-updates problem in Fig.\ref{fig:coding_updates_var1}, there exists a linear encoder $H$ of the form $H = SC, S \in \mathbb{F}_q^{\gamma \times \theta}$, and a decoder $\mathcal{D}$ such that the output of the decoder is $A({\bf x}  + {\bf e})$ for all ${\bf x}, {\bf e} \in \mathbb{F}_q^B$, Hamming wt.$({\bf e}) \leq \epsilon$, whenever $q \geq 2\epsilon B^{2\epsilon}$ and $\gamma \geq \min(2\epsilon, \text{rank}(A))$. 
\end{lem}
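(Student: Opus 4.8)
The plan is to reduce this variation directly to Lemma~\ref{lem:func_updates} by exploiting the factorization $A = CP$. The key observation is that, although the source now only has access to the projected vector $P({\bf x} + {\bf e})$ rather than the full updated message ${\bf x} + {\bf e}$, applying the proposed encoder $H = SC$ to this projected input yields
\[
SC\, P({\bf x} + {\bf e}) \;=\; S(CP)({\bf x} + {\bf e}) \;=\; SA({\bf x} + {\bf e}),
\]
which is \emph{exactly} the vector of $\gamma$ coded symbols produced by the encoder $H = SA$ in the original function-updates problem of Fig.~\ref{fig:coding_updates}. So the two settings transmit identical data, and no genuinely new decoding argument is needed — the whole proof collapses onto this one identity.

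Concretely, first I would fix $S \in \mathbb{F}_q^{\gamma \times \theta}$ to be the same matrix guaranteed by Lemma~\ref{lem:func_updates} under the hypotheses $q \geq 2\epsilon B^{2\epsilon}$ and $\gamma \geq \min(2\epsilon, \text{rank}(A))$, and set $H = SC$. Since the source knows $A$ and $P$, it can form a matrix $C$ with $A = CP$, hence build $H = SC$ (of size $\gamma \times \theta'$) and apply it to the available input $P({\bf x} + {\bf e})$; this confirms the scheme is realizable with the information the source actually holds. On the receiver side, I would invoke the fact that the receiver still maintains $A{\bf x}$ and now receives $SA({\bf x} + {\bf e})$, precisely as before: subtracting $SA{\bf x}$ recovers $SA{\bf e}$, and because ${\bf e}$ is $\epsilon$-sparse, the same decoder $\mathcal{D}$ from Lemma~\ref{lem:func_updates} returns $A{\bf e}$ and therefore $A({\bf x}+{\bf e}) = A{\bf x} + A{\bf e}$. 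The decodability guarantee depends only on the properties of $S$ relative to $A$ established in Lemma~\ref{lem:func_updates}, which are untouched by the presence of $P$; hence the conditions $q \geq 2\epsilon B^{2\epsilon}$ and $\gamma \geq \min(2\epsilon, \text{rank}(A))$ carry over verbatim.

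Because the reduction is an exact equality rather than an approximation, I do not expect a real obstacle here. The only points that warrant care are (i) checking the dimensions line up, so that $SC$ is a legitimate $\gamma \times \theta'$ encoder acting on the $\theta'$-length input $P({\bf x}+{\bf e})$, and (ii) using the factorization $A = CP$ in exactly the right spot to collapse $SCP$ back to $SA$, thereby certifying that the transmitted symbols coincide with those of the baseline problem. Once these are in place, achievability follows immediately from Lemma~\ref{lem:func_updates}.
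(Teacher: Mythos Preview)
Your proposal is correct and matches the paper's own proof essentially line for line: both pick the matrix $S$ guaranteed by Lemma~\ref{lem:func_updates}, set $H = SC$, observe that $H\bigl(P({\bf x}+{\bf e})\bigr) = SCP({\bf x}+{\bf e}) = SA({\bf x}+{\bf e})$ coincides with the output of the original encoder $H' = SA$, and then reuse the decoder $\mathcal{D}' = \mathcal{D}$ from Lemma~\ref{lem:func_updates} unchanged. The only cosmetic difference is that you spell out the decoder's internal step (subtract $SA{\bf x}$, recover $A{\bf e}$), whereas the paper simply invokes $\mathcal{D}'$ as a black box.
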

\begin{proof}
If the input had been ${\bf x}  + {\bf e}$, instead of $P({\bf x}  + {\bf e})$, then we know from Lemma \ref{lem:func_updates} that there exists a { linear encoder $H'$ of the form} $H' = SA, S \in \mathbb{F}_q^{\gamma \times \theta}$, and a decoder $\mathcal{D}'$ such that the output of the decoder is $A({\bf x}  + {\bf e})$ for all ${\bf x}, {\bf e} \in \mathbb{F}_q^B$, Hamming wt.$({\bf e}) \leq \epsilon$, whenever $q \geq 2\epsilon n^{2\epsilon}$ and $\gamma \geq \min(2\epsilon, \theta)$. Since $A = CP$, $H' = SCP$, and thus { $H'({\bf x}  + {\bf e}) = H(P{\bf x}  + P{\bf e})$}, where the candidate encoder $H$ in Fig. \ref{fig:coding_updates_var1} is chosen as $H = SC$. The candidate decoder $\mathcal{D} = \mathcal{D}'$.
\end{proof}

\vspace{0.1in}

\begin{note}
One can convert the problem in Fig. \ref{fig:coding_updates_var1} to the same form as in Fig. \ref{fig:coding_updates}, by letting ${\bf x}' = P{\bf x}$ and ${\bf e}' = P{\bf e}$ in Fig. \ref{fig:coding_updates_var1}, so that the input becomes ${\bf x}' + {\bf e}'$ and the decoder side-information becomes $C{\bf x}'$. However, there is no natural relation between the sparsity of ${\bf e}$ and ${\bf e}'$; in fact we only know that Hamming wt.$({\bf e}') \leq \theta'$. As a result, a direct application of Lemma \ref{lem:func_updates} to the converted problem only guarantees an achievability scheme whose communication cost is given by 
\begin{eqnarray}
\gamma' & = & \min(2\theta', \text{rank}(C)) \\
& = & \text{rank}(C) \\
& \geq & \text{rank}(A) \\
& \geq & \min(2\epsilon, \text{rank}(A)) = \gamma,
\end{eqnarray}
where $\gamma$ is the communication cost guaranteed by Lemma \ref{lem:func_updates_var1}. As a result, we will use the scheme guaranteed by Lemma \ref{lem:func_updates_var1} instead of the obvious conversion discussed here.
\end{note}

\begin{figure}[h]
	\centering
	\includegraphics[width=3in]{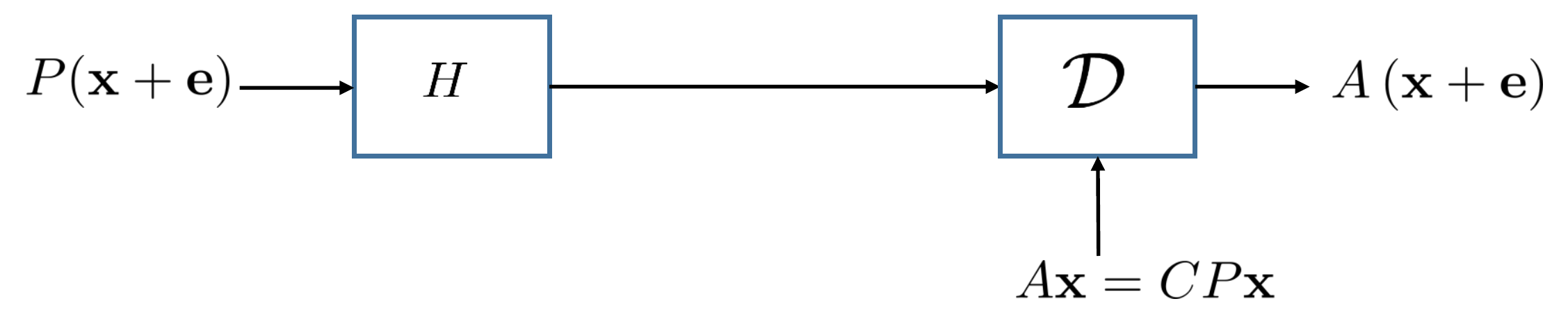}
	\caption{A { variant } of the function updates problem shown in Fig. \ref*{fig:coding_updates}.}
	\label{fig:coding_updates_var1}
\end{figure}

The second variation that we need is a multi-source variation of the model in Fig. \ref{fig:coding_updates_var1} (see Fig. \ref{fig:coding_updates_var2}). In this model, there are $\alpha$ sources, $\alpha \geq 1$. The $i^{\text{th}}, 1 \leq i \leq \alpha$ source input is given by $P_i({\bf x}  + {\bf e})$ where $P_i$ is a $\theta'_i \times B$ matrix. The decoder side-information is $A = CP$, where $C = [ C_1 \ C_2 \ \ldots C_{\alpha}], C_i \in \mathbb{F}_q^{\theta \times \theta'_i}$ and 
\begin{eqnarray}
P & = & \left[ \begin{array}{c} P_1 \\ P_2 \\ \vdots \\ P_{\alpha} \end{array} \right],  \ P_i \in 	\mathbb{F}_q^{\theta'_i \times B}.
\end{eqnarray}	
The following lemma guarantees an achievable scheme for this variation.

\vspace{0.1in}

\begin{lem}[\cite{PraMed}] \label{lem:func_updates_var2}
	For the function-updates problem in Fig.\ref{fig:coding_updates_var1}, there exists encoders $\{H_i, 1 \leq i \alpha\}$ of the form  $H_i = SC_i, S \in \mathbb{F}_q^{\gamma \times \theta}$, and a decoder $\mathcal{D}$ such that the output of the decoder is $A({\bf x}  + {\bf e})$ for all ${\bf x}, {\bf e} \in \mathbb{F}_q^B$, Hamming wt.$({\bf e}) \leq \epsilon$, whenever $q \geq 2\epsilon B^{2\epsilon}$ and $\gamma \geq \min(2\epsilon, \text{rank}(A))$. 
\end{lem}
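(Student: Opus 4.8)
The plan is to reduce the multi-source problem directly to the single-source variation of Lemma~\ref{lem:func_updates_var1}, treating the stacked matrix $P = [P_1;\,\ldots;\,P_{\alpha}]$ as the effective source projection. The essential observation is the block identity $CP = \sum_{i=1}^{\alpha} C_i P_i = A$, which follows immediately from the column partition $C = [\,C_1\ \cdots\ C_{\alpha}\,]$ and the conformant row partition of $P$.

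First, I would invoke Lemma~\ref{lem:func_updates_var1} with exactly this $P$ and $C$. Since $A = CP$, that lemma supplies a matrix $S \in \mathbb{F}_q^{\gamma \times \theta}$ and a decoder $\mathcal{D}$ such that the single-source encoder $H = SC$ acting on the input $P({\bf x}+{\bf e})$ allows $\mathcal{D}$ to output $A({\bf x}+{\bf e})$, valid whenever $q \geq 2\epsilon B^{2\epsilon}$ and $\gamma \geq \min(2\epsilon,\text{rank}(A))$. These are precisely the hypotheses we are allowed to assume, so the parameter conditions transfer verbatim and no fresh field-size or sparsity analysis is needed.

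Next I would set the per-source encoders to $H_i := SC_i$, which is legitimate because source $i$ knows $S$ and $C_i$ and observes its own input $P_i({\bf x}+{\bf e})$. The heart of the argument is then a one-line linearity computation showing that the aggregate of the individual encoder outputs coincides with the single-source encoder output:
\begin{eqnarray}
\sum_{i=1}^{\alpha} H_i\, P_i({\bf x}+{\bf e}) &=& S\Big(\sum_{i=1}^{\alpha} C_i P_i\Big)({\bf x}+{\bf e}) \\
&=& S\,CP\,({\bf x}+{\bf e}) \;=\; SA({\bf x}+{\bf e}).
\end{eqnarray}
Hence, once the $\alpha$ encoded packets (each a vector in $\mathbb{F}_q^{\gamma}$) are additively combined, the decoder is presented with exactly $SA({\bf x}+{\bf e})$, the same quantity it would see in the single-source setting, and $\mathcal{D}$ recovers $A({\bf x}+{\bf e})$.

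The theorem is thus almost entirely inherited from Lemma~\ref{lem:func_updates_var1}; the decomposition leaves ${\bf e}$ and its Hamming weight untouched, so the bound on $\gamma$ is preserved. The one point I expect to require care --- indeed the only conceptual obstacle --- is the combining convention of the multi-source model: since $\gamma \geq \min(2\epsilon,\text{rank}(A))$ bounds the length of the \emph{combined} packet rather than $\alpha$ separate packets, one must confirm that the model sums the source contributions (or, equivalently, that $\mathcal{D}$ is permitted to form $\sum_i H_i P_i({\bf x}+{\bf e})$ itself). Under either reading the displayed identity is the operative fact, and the result follows.
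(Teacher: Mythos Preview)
Your proposal is correct and matches the paper's own proof essentially line for line: choose $S$ via Lemma~\ref{lem:func_updates_var1} for the stacked input $P({\bf x}+{\bf e})$, set $H_i = SC_i$, and have the decoder first sum the $\alpha$ received packets to obtain $SCP({\bf x}+{\bf e})$ before invoking the single-source decoder. Your closing remark about the combining convention is resolved exactly as you anticipated --- the paper has $\mathcal{D}$ form the sum itself as its first step.
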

\begin{proof}
The matrix $S$ is chosen using Lemma \ref{lem:func_updates_var1} as though we are designing an encoder for the single source problem whose input is $P({\bf x}  + {\bf e})$. The decoder as a first step uses all the inputs to compute 
\begin{eqnarray}
\sum_{i = 1}^{\alpha}H_i\left(P_i({\bf x}  + {\bf e}) \right) & = & S \sum_{i = 1}^{\alpha}C_iP_i({\bf x}  + {\bf e})\\
& = & SCP({\bf x}  + {\bf e}),
\end{eqnarray}
and as a second step uses the decoder that is guaranteed by Lemma \ref{lem:func_updates_var1} for the  single source problem whose input is $P({\bf x}  + {\bf e})$. It is clear that the such a scheme works for the problem in Fig. \ref*{fig:coding_updates_var2}.
\end{proof}
	
\begin{figure}[h]
	\centering
	\includegraphics[width=3in]{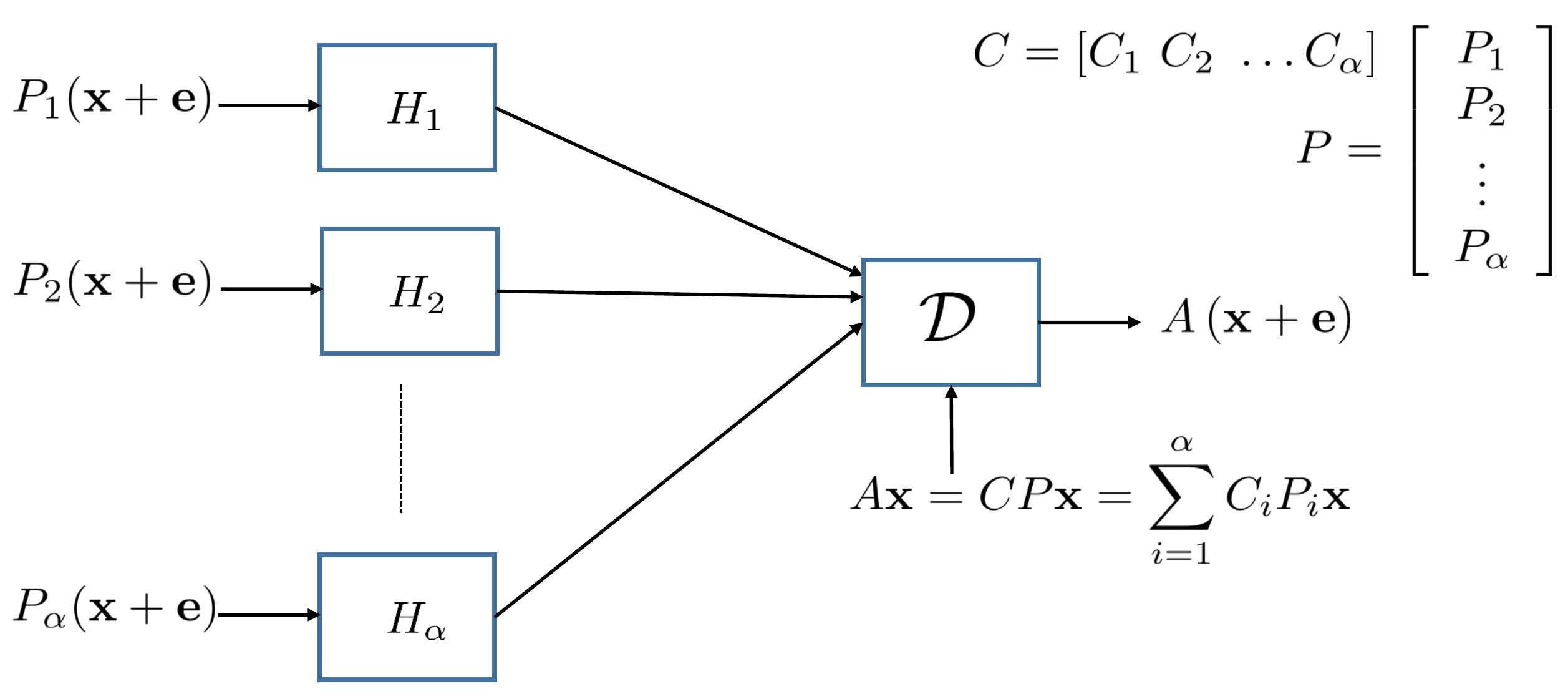}
	\caption{A multi-source { variant} of the function updates problem shown in Fig. \ref*{fig:coding_updates_var1}.}
	\label{fig:coding_updates_var2}
\end{figure}

\section{Our Scheme} \label{sec:scheme}

We are now ready to describe an achievable scheme for problem described in Section \ref{sec:problem}. We describe strategies for source encoding, intermediate node processing including caching, and also the processing at the destinations. The first round does not benefit from the caches, while the remaining rounds can potentially benefit from them. Below, we specify the strategies separately for the first round, and for any round beyond the first.  We shall use $\mathcal{S}$ to denote the achievable scheme presented here. The scheme is designed for fixed values of $B, \epsilon, M$ and for a fixed directed acylic graph whose nodes and edge set are given by $\mathcal{N}$ and $\mathcal{E}$. 

\subsection{Round $1$}

During round $1$, caches are empty and not useful. The problem is simply one of multicasting the frame $F_1$ from $S$ to $D_1, \ldots D_L$ on a directed acyclic network. We use a linear network code (LNC) $\mathcal{C} = \{G_1, \ldots, G_N\}$ for multicasting $F_1$ in round $1$, where $G_i$ denotes the matrix of coding coefficients at node $V_i, 1 \leq i \leq N$. 
We shall use ${\bf x}_i^{(1)}$ and ${\bf y}_i^{(1)}$ to denote the input and output vectors, respectively, at node $V_i$ during round $1$, where 
{ 
\begin{eqnarray}
{\bf x}_i^{(1)} & = & \left[ \begin{array}{c} {\bf x}_{i, e_1}^{(1)} \\ {\bf x}_{i, e_2}^{(1)} \\ \vdots \\ {\bf x}_{i, e_{\alpha_i}}^{(1)}\end{array}  \right], \begin{array}{c} \{e_1, e_2, \ldots, e_{\alpha_i}\}  = \mathcal{N}_{in}(i) \\ i \neq 1 \end{array} \label{eq:nodeinput} \\
{\bf y}_i^{(1)} & = & \left[ \begin{array}{c} {\bf y}_{i, f_1}^{(1)} \\ {\bf y}_{i, f_2}^{(1)} \\ \vdots \\ {\bf y}_{i, f_{\beta_i}}^{(1)}\end{array}  \right], \begin{array}{c}  \{f_1, f_2, \ldots, f_{\beta_i}\}  = \mathcal{N}_{out}(i) \\ i \notin \{N-L+1, \ldots, N\}\end{array}.  \label{eq:xy}
\end{eqnarray}
}
In the above equation, the column vector ${\bf x}_{i, e_{\ell}}^{(1)}, e_{\ell} \in \mathcal{N}_{in}(i)$ denotes the input received from the in-neighbor $V_{e_{\ell}}$  and  ${\bf y}_{i, f_{\ell}}^{(1)}, f_{\ell} \in \mathcal{N}_{out}(i)$ denotes the output sent to the out-neighbor $V_{f_{\ell}}$. Under this notation, we have 
\begin{eqnarray}
{\bf x}_{i, j}^{(1)} & = & {\bf y}_{j, i}^{(1)},  1 \leq i \leq N,
\end{eqnarray}
whenever $j \in \mathcal{N}_{in}(i)$ and $i \in \mathcal{N}_{out}(j)$. We assume that the LNC $\mathcal{C}$ is such that the quantities  ${\bf x}_i^{(1)}$ and ${\bf y}_i^{(1)}$ are related as 
\begin{eqnarray} \label{eq:round1_LNC}
{\bf y}_i^{(1)} & = & G_i {\bf x}_i^{(1)}, 1 \leq i \leq N.
\end{eqnarray}
We assume that $G_i$ has a form given by 
\begin{eqnarray}
G_i & = [ G_{i, 1} \ G_{i, 2} \ \ldots, G_{i, \alpha_i}],
\end{eqnarray}
such that ${\bf y}_i^{(1)} = \sum_{\ell = 1}^{\alpha_i}G_{i, \ell}{\bf x}_{i,e_{\ell}} ^{(1)}$. { Note that \eqref{eq:nodeinput} and \eqref{eq:xy} do not apply at the source and the destination(s), respectively.}  The quantity ${\bf x}_1^{(1)} = {\bf m}^{(1)}$, the input frame for round $1$. Further, the quantity ${\bf y}_{N - L + i}^{(1)}, 1 \leq i \leq L$ denotes the decoded output at the destination node $D_i$. The LNC $\mathcal{C}$ is chosen such that
\begin{eqnarray}
{\bf y}_{N - L + i}^{(1)} & = & {\bf m}^{(1)}, 1 \leq i \leq L. \label{eq:decode_rd1}
\end{eqnarray}
Note that such an LNC always exists since, a trivial solution for the LNC $\mathcal{C}$ is to simply pick ${\bf y}_{i, j}^{(1)}  = {\bf x}_i^{(1)}, 1 \leq j \leq \beta_i, 1 \leq  i \leq N$. Of course, as we shall see, our goal to pick the LNC (along with the strategy for the other rounds) so as to optimize the cost of communication and storage for the entire $M$ rounds. 

As for the caching policy during round $1$, we assume that each node $V_i , i > 1$  either caches the entire output vector ${\bf y}_{i}^{(1)}$ or does not cache anything\footnote{ 
{ In this work, we focus on a single connection form a single source. In practice, it is likely that a certain intermediate node, or even the destination node is simultaneously involved in multiple connections. In this scenario, one strategy is to divide the available cache at a node to a whole number of simultaneous connections, such that each connection either gets all the necessary cache or gets nothing. }}. We use the indicator variable $\delta_i$ to determine if node $V_i$ caches or not; $\delta_i = 1$ if $V_i$ caches; else $\delta_i = 0$.  We note that the decision to cache or not is be chosen (see Section \ref{sec:opt}) so as to optimize the cost of communication and storage for the entire $M$ rounds.

\subsection{Rounds $2, \ldots, M$}

The basic idea is to continue using the same LNC $\mathcal{C}$ for every round after the first as well, but we take advantage of the cache content, if any, to decrease the communication requirement. To implement the code $\mathcal{C}$ at any node $V_i \in \mathcal{N}$ in round $r, 2 \leq r \leq M$, we perform encoding at the various in-neighbors of $V_i$ such that given the input vector and the cache content (if present), node $V_i$ manages to generate the output vector ${\bf y}_i^{(r)}$ given by 
\begin{eqnarray}
{\bf y}_i^{(r)}& = & G_i{\bf x}_i^{(r)}, \label{eq:hypothetical}
\end{eqnarray}
{ where the quantity ${\bf y}_i^{(r)}$ is defined analogously as in \eqref{eq:xy} (by simply replacing the superscript  $(1)$ with $(r)$). The quantity ${\bf x}_i^{(r)}$ is the input that node $i$ would receive in the absence of cache from previous round\footnote{{The quantity ${\bf x}_i^{(r)}$ does not represent the actual input to node $i$, if node $i$ employs a cache. We avoid a general notation for input to node $i$ for round $r$, since this is not needed for the discussion.}}, and is defined like in \eqref{eq:nodeinput}. If there is no cache, we further know that 
\begin{eqnarray}
{\bf x}_{i, e_j}^{(r)} & = & {\bf y}_{e_j, i}^{(r)},  e_j \in \mathcal{N}(i).
\end{eqnarray} 
When there is a cache at node $i$, node $e_j, e_j \in \mathcal{N}(i)$ instead of sending ${\bf y}_{e_j, i}^{(r)}$ as it is,  encodes ${\bf y}_{e_j, i}^{(r)}$ using the scheme described  by Lemma \ref{lem:func_updates_var2}, so that we have a lower the communication requirement, between the in-neighbors of $V_i$ and $V_j$, than what we need during round $1$ (so as to satisfy \eqref{eq:hypothetical} in round $r$). Before describing the usage of  Lemma \ref{lem:func_updates_var2},  we note that the caching policy for any round is same as that of round $1$, i.e., if $\delta_i = 1$ after the first round, the node simply replaces existing cache content with the new output vector for that round, and $\delta_i = 0$,  the after the first round, the node does not cache anything after any round.

We now describe the usage of Lemma \ref{lem:func_updates_var2}. } Toward this, we note that ${\bf y}_i^{(r)}$ can be written as 
\begin{eqnarray}
{\bf y}_i^{(r)} & = & A_i{\bf m}^{(r)}, 1 \leq i \leq N,
\end{eqnarray}
where the matrix $A_i$ is uniquely determined given the LNC $\mathcal{C}$. We assume that matrix $A_i$ has the form 
\begin{eqnarray}
A_i & = & \left[ \begin{array}{c} A_{i, f_1} \\ A_{i, f_2} \\ \vdots \\ A_{i, f_{\beta_i}}
\end{array} \right], \{f_1, f_2, \ldots, f_{\beta_i}\}  = \mathcal{N}_{out}(i),
\end{eqnarray}
where $A_{i, f_{\ell}}$ is such that ${\bf y}_{i, f_{\ell}}^{(r)} = A_{i, f_{\ell}}{\bf m}^{(r)}, f_{\ell} \in \mathcal{N}_{out}(i)$.   Consider the indices $\mathcal{N}_{in}(i) = \{e_1, e_2, \ldots, e_{\alpha_i}\}$ of in-neighbors of $V_i$. 
If $\delta_i = 0$, in-neighbor $V_{e_{\ell}}, e_{\ell} \in \mathcal{N}_{in}(i)$ simply sends, like in round $1$, ${\bf y}_{e_{\ell},i}^{(r)}$   to $V_i$. However, if $\delta_i = 1$, $V_{e_{\ell}}, e_{\ell} \in \mathcal{N}_{in}(i)$ encodes ${\bf y}_{e_{\ell},i}^{(r)}$ using scheme of Lemma \ref{lem:func_updates_var2} assuming destination side-information ${\bf y}_{i}^{(r-1)}$.
The function-updates coding problem that arises here is shown in Fig. \ref{fig:coding_updates_opt}.  It is straight forward to see that the problem in Fig. \ref{fig:coding_updates_opt} is an instance of the one in Fig. \ref{fig:coding_updates_var2}, and thus the scheme guaranteed by Lemma \ref{lem:func_updates_var2} can be applied to possibly reduce communication cost needed to compute  ${\bf y}_{i}^{(r)}$ at the node $V_i$. This completes the description of the scheme. An overview of the various steps at node $V_i$ that uses its cache is pictorially shown in Fig. \ref{fig:overview}.

\begin{figure}[h]
	\centering
	\includegraphics[width=3.5in]{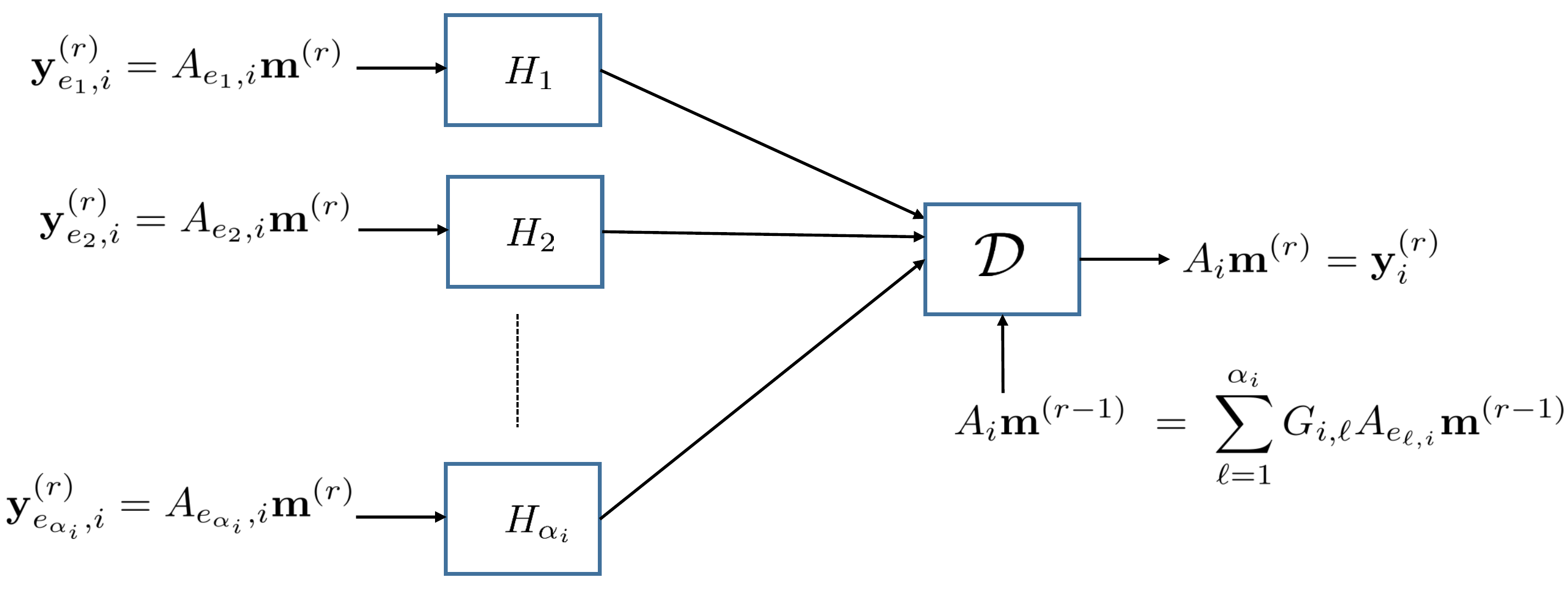}
	\caption{The function-updates coding problem at the in-neighbors of $V_i$ during round $r$. $V_i$ has cached its output from round $(r-1)$, and this is available as side-information during decoding its output for round $r$. The problem is an instance of the one shown in Fig. \ref{fig:coding_updates_var2}.}
	\label{fig:coding_updates_opt}
\end{figure}

\begin{figure*}[h]
	\centering
	\includegraphics[width=5in]{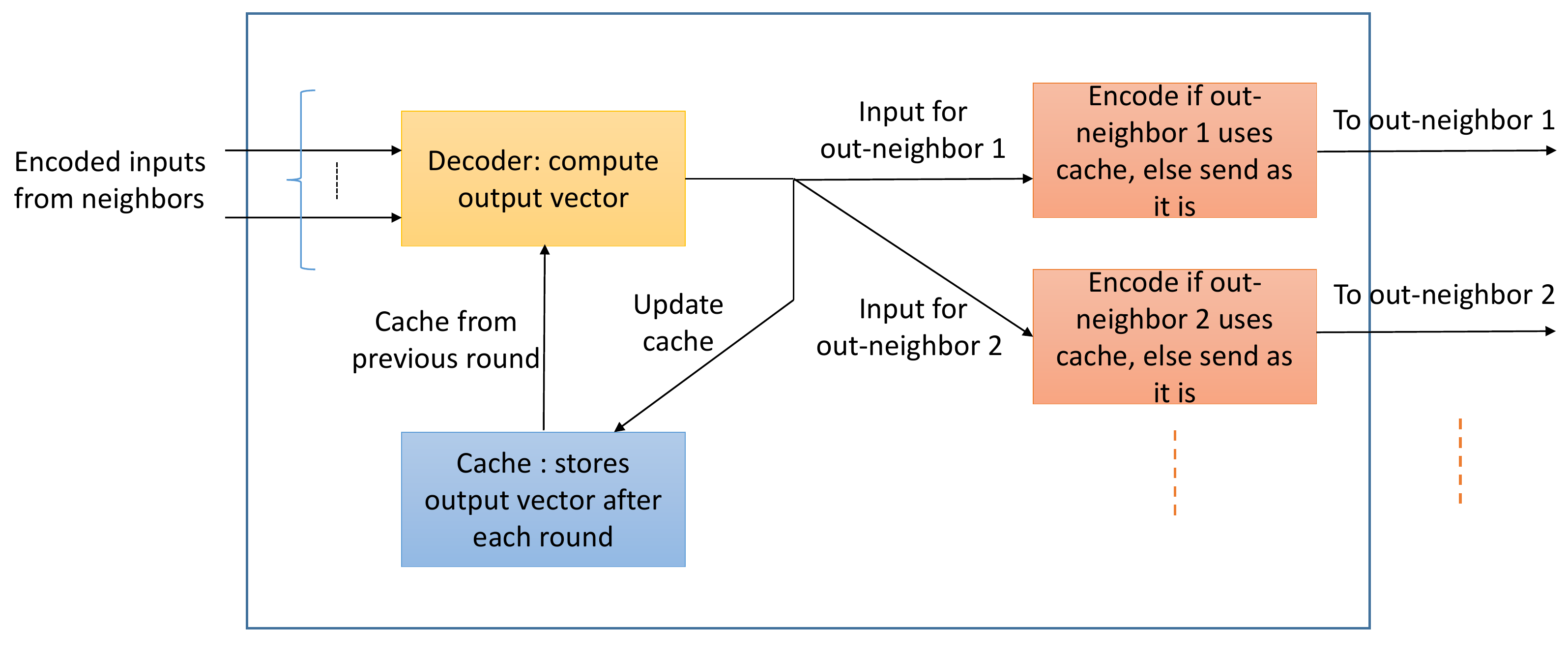}
	\caption{An overview of the steps performed by a non-source node that uses its cache storing its vector after each round.}
	\label{fig:overview}
\end{figure*}

\vspace{0.1in}

\begin{note}
	In our discussion, we  ignore the issue of how the encoder and decoders are designed for the various nodes in the network. A simple approach is to assume that every node has global knowledge of the LNC $\mathcal{C}$, and in addition, each node also knows whether any of its out-neighbors uses its cache or not. Both these assumptions ensure that each node can independently design the encoder and decoders necessary for using the function updates scheme. We note that the assumption is not a  harsh one, since the same LNC is reused for $M$ rounds. In practice, it might take a few initial rounds of network gossip (via different paths that are not available for data transmission) for establishing global knowledge of $\mathcal{C}$, and  we can make use of the function updates scheme after these initial rounds.  Note that in the current discussion, we assume that the function updates scheme is usable right from round $2$.
\end{note}

\section{Optimizing the Storage and Communication Cost for the Scheme} \label{sec:opt}

We now discuss the storage and communication cost associated with the scheme in Section \ref{sec:scheme}. We compute the costs separately for the first round, and for any round beyond the first. We use $\rho_S(r)$ and $\rho_C(r)$ to respectively denote the storage and communication cost of round $r, 1 \leq r \leq M$. We are interested in optimizing the overall cost for round $r$, given by $\rho(r) = \rho_S(r) + \rho_C(r)$.

\vspace{0.1in}

\emph{Cost of Round $1$:} For any edge $(i, j) \in \mathcal{E}$, we define $\sigma_{i, j}$ as the number of bits sent by node $V_i$ to node $V_j$, i.e., 
\begin{eqnarray} \label{eq:length}
\sigma_{i, j} & = & \text{length}\left({\bf y}_{i, j}^{(1)}\right)\log_2(q).
\end{eqnarray}
There is no storage cost in round $1$ since caches are initially empty. The overall cost for round $1$, which is the total communication cost, is given by
\begin{eqnarray}
\rho(1) & = & \rho_C(1) \ = \sum_{(i, j) \in \mathcal{E}} f_{i, j}\left(\sigma_{i, j}\right),
\end{eqnarray}
where recall that $f_{i,j}(s), s\geq 1$ denotes the cost to transmit, in a single round, a packet of size $s$ bits along the directed edge $V_i \rightarrow V_j$.

\vspace{0.1in}

\emph{Cost of Round $r, 2 \leq r \leq M$:} We first calculate cost incurred to calculate the output at a single node $V_i , 2 \leq i \leq N$. If $\delta_i = 0$ (i.e., $V_i$ does not use cache), cost of calculating the output of $V_i$ is simply the cost associated with the incoming edges of $V_i$. This is given by $\sum_{j \in \mathcal{N}_{in}(i)}f_{ j,i}\left(\sigma_{j,i}\right)$, where $\sigma_{j, i}$ is  defined\footnote{Note that length of the output vector of any node remains the same for any round. This is because, we effectively use the same LNC $\mathcal{C}$ for all $M$ rounds. The function update scheme should be thought of as an alternative method, when compared to the first round, to enable the usage of the code $\mathcal{C}$} as in \eqref{eq:length}. If $\delta_i = 1$ (i.e., $V_i$ uses cache), cost of calculating the output of $V_i$ includes cache cost as well as communication cost associated with the incoming edges of $V_i$ for sending encoded updates. The cache cost of $V_i$ for round $r$ is upper bounded by $f_i(\sigma_{i})$, where 
\begin{eqnarray} \label{eq:length2}
\sigma_{i} & = &\text{length}\left({\bf y}_{i}^{(r-1)}\right)\log_2(q).
\end{eqnarray}
Note that $f_i(\sigma_{i})$ is only an upper bound on the cache cost, since the entries of {${\bf y}_{i}^{(r-1)}$} may be linearly dependent\footnote{We do not bother compressing { ${\bf y}_{i}^{(r-1)}$}, by eliminating linear dependence, since it complicates the objective function for our cost optimization problem.} (for example the same linear combination may be sent to two output links). Also, note that $\sigma_{i}$ is the same for any $r, 1 \leq r \leq M$. Furthermore, $\sigma_{i}$ in \eqref{eq:length2} is also given by 
\begin{eqnarray} \label{eq:length3}
\sigma_{i} & = & \sum_{\ell \in \mathcal{N}_{out}(i)}\sigma_{i, \ell}. 
\end{eqnarray}
The communication cost, using Lemma \ref{lem:func_updates_var2}, is upper bounded by { $\sum_{j \in \mathcal{N}_{in}(i)}f_{j, i}\left(2\epsilon\right)$}, where $\epsilon = \text{Hamming wt.}({\bf m}^{(r-1)} - {\bf m}^{(r-1)}), 2 \leq r \leq M$. Note that the communication cost in this case is only an upper bound, since we ignore the min-function calculation in Lemma \ref{lem:func_updates_var2}. Thus the overall cost of round $r, 2 \leq r \leq M$ is upper bounded as follows:
{
\begin{eqnarray}
\rho(r) & \leq & \rho*(r) \triangleq \sum_{i = 2}^{N}\left( (1-\delta_i)\sum_{j \in \mathcal{N}_{in}(i)}f_{ j, i}\left(\sigma_{i, j}\right) + \nonumber \right.\\ 
& & \left.\delta_i\left(f_i(\sigma_{i}) +  \sum_{j \in \mathcal{N}_{in}(i)}f_{j, i}\left(2\epsilon\right) \right)\right). \label{eq:long1}
\end{eqnarray}
The overall cost $\Psi_{\mathcal{S}}$ of the scheme $\mathcal{S}$ for all $M$ rounds is given by 
\begin{eqnarray}
\Psi_{\mathcal{S}} & = & \sum_{r = 1}^{M} \rho(r) \\
& \leq & \sum_{(i, j) \in \mathcal{E}} f_{i, j}\left(\sigma_{i, j}\right) + (M-1)\rho*(r)  \\
& \triangleq & \Psi^*(\Sigma, \Delta),
\end{eqnarray}
}
where {$\rho*(r)$ is as defined in \eqref{eq:long1}} and $\Sigma$ is the multi-set $\{\sigma_{j, i}, 2 \leq i \leq N, j \in \mathcal{N}_{in}(i)\}$, and $\Delta$ is the multi-set $\{\delta_i, 2 \leq i \leq N\}$.

\subsection{The Optimization Problem of Interest}

Recall that the scheme $\mathcal{S}$ is designed for a given directed acylic graph$(\mathcal{N}, \mathcal{E})$ and for fixed values of $B, \epsilon, M$. The scheme involves $1)$ picking an LNC $\mathcal{C}$ such that $\eqref{eq:decode_rd1}$ is satisfied, and $2)$ deciding the values of the elements of $\Delta = \{\delta_i, 2 \leq i \leq N\}$. Note that the LNC $\mathcal{C}$ automatically fixes the values of the elements of  $\Sigma = \{\sigma_{j, i}, 2 \leq i \leq N, j \in \mathcal{N}_{in}(i)\}$. The cost of the scheme is upper bounded by $\Psi_{\mathcal{S}} \leq \Psi^*(\Sigma, \Delta)$.

\vspace{0.1in}

\begin{thm}
Let $\widehat{\Sigma}$ and $\widehat{\Delta}$ denote the multi-sets $\{\widehat{\sigma}_{j, i} \in \mathbb{R}, 2 \leq i \leq N, j \in \mathcal{N}_{in}(i)\}$ and $\{\widehat{\delta}_i \in \{0, 1\}, 2 \leq i \leq N\}$. Consider the following optimization problem for given values of $\epsilon, M, B$ and a given DAG $(\mathcal{N}, \mathcal{E})$:
\begin{eqnarray}\label{eq:opt_orig}
& &\text{minimize } \Psi^*(\widehat{\Sigma}, \widehat{\Delta}) \\ 
& &\text{  subject to } \\
& & \widehat{\delta}_{\ell} \in \{0, 1\}, 2 \leq \ell \leq N \label{eq:integral_const} \\
& & \widehat{\sigma}_{j, i} \geq \widehat{\mu}_{j, i}^{(t)} \geq 0 \label{eq:max_const} \\
& & \sum_{j \in \mathcal{N}_{out}(i)} \widehat{\mu}_{i, j}^{(t)} - \sum_{j \in  \mathcal{N}_{in}(i)} \widehat{\mu}_{j, i}^{(t)} = \theta_i^{(t)}, \nonumber\\
& & 1 \leq i \leq N, t \in \{N - L + 1, \ldots, N\} \label{eq:flow_conservation}
\end{eqnarray}
where
\begin{eqnarray}
\theta_i^{(t)} & = & \left\{ \begin{array}{c} B, i = 1 \\ -B, i = t \\ 0, \text{else} \end{array} \right. .
\end{eqnarray}
Then corresponding to any feasible solution $\widehat{\Sigma}, \widehat{\Delta}$, there exists an achievable scheme $\mathcal{S}$ (for the given system parameters) operating over $\mathbb{F}_q, q \geq 2\epsilon B^{2\epsilon}$ such that 
\begin{enumerate}
	\item the number of symbols per round generated by node $V_i$ for node $V_j$ is given by $\sigma_{i, j} = \left \lceil \widehat{\sigma_{i, j}}/\log (q) \right \rceil$
	\item the number of symbols per round sent by node $V_i$ to node $V_j$ is given  by $2 \delta_j \epsilon + ( 1- \delta_j) \sigma_{i, j}$, where $\delta_j  = \widehat{\delta}_j$.
	\item the number of symbols cached per round by node $V_j$ is upper bounded by $\delta_j \sigma_j$, where  $\sigma_j$ is given by \eqref{eq:length3}, and 
	\item the cost of the scheme is upper bounded by $\Psi_{\mathcal{S}} \leq \Psi^*(\Sigma, \Delta)$.
\end{enumerate}
\end{thm}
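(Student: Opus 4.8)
The plan is to produce the scheme $\mathcal{S}$ by instantiating the construction of Section \ref{sec:scheme} with two choices read off from the feasible point: (a) a linear network code $\mathcal{C} = \{G_1, \ldots, G_N\}$ whose per-edge symbol rates are $\sigma_{i,j} = \lceil \widehat{\sigma}_{i,j}/\log q \rceil$, and (b) the caching indicators $\delta_i = \widehat{\delta}_i$. Once such a $\mathcal{C}$ is shown to exist, properties $1$--$4$ are essentially a reading-off from the definition of $\mathcal{S}$ together with the per-round cost analysis already carried out in this section. So the real work is the existence of $\mathcal{C}$; everything else is bookkeeping.

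The existence of $\mathcal{C}$ is where I would spend the argument. The constraints \eqref{eq:max_const} and \eqref{eq:flow_conservation} say exactly that, for every destination $t \in \{N-L+1, \ldots, N\}$, the quantities $\{\widehat{\mu}_{j,i}^{(t)}\}$ form a source-to-$t$ flow of value $B$ that is dominated edge-by-edge by the common capacities $\widehat{\sigma}_{j,i}$. Treating $\widehat{\sigma}_{j,i}$ as edge capacities, this means every destination admits a flow of value $B$ from $V_1$, i.e. every source–destination cut has capacity at least $B$. Rounding each capacity up to the integer symbol rate $\sigma_{i,j}$ can only enlarge it, so the min-cut-$\geq B$ condition survives the rounding. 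Invoking the achievability direction of the minimum-cost multicast result \cite{Lun} (equivalently, the fundamental multicast network coding theorem), there exists a linear network code over $\mathbb{F}_q$ that delivers the frame to all $L$ destinations—so that \eqref{eq:decode_rd1} holds—while generating at most $\sigma_{i,j}$ symbols on each edge $(i,j)$, provided $q$ exceeds a bound depending only on $(\mathcal{N},\mathcal{E})$ and $L$. Since that bound is far below $2\epsilon B^{2\epsilon}$, we may fix one prime power $q \geq 2\epsilon B^{2\epsilon}$ supporting both this code and the function-update codes.

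With $\mathcal{C}$ and $\Delta$ fixed, I would then run $\mathcal{S}$ and verify the four claims in turn. Property $1$ is immediate from the construction of $\mathcal{C}$. For property $2$, in any round $r \geq 2$ node $V_i$ must recover $\mathbf{y}_i^{(r)} = A_i \mathbf{m}^{(r)}$ from its cached side information $\mathbf{y}_i^{(r-1)} = A_i \mathbf{m}^{(r-1)}$ and its in-neighbors' transmissions; since $\mathbf{m}^{(r)} = \mathbf{m}^{(r-1)} + \mathbf{e}$ with Hamming wt.$(\mathbf{e}) \leq \epsilon$, this is precisely an instance of the problem in Fig.~\ref{fig:coding_updates_var2}, and Lemma \ref{lem:func_updates_var2} lets each in-neighbor transmit at most $2\epsilon$ symbols when $\delta_i = 1$ and $\sigma_{i,j}$ symbols when $\delta_i = 0$, yielding the stated per-edge count $2\delta_j\epsilon + (1-\delta_j)\sigma_{i,j}$. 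Property $3$ holds because a caching node stores its entire output vector, of symbol length $\sigma_j = \sum_{\ell \in \mathcal{N}_{out}(j)} \sigma_{j,\ell}$ by \eqref{eq:length3}, and only when $\delta_j = 1$. Property $4$, the bound $\Psi_{\mathcal{S}} \leq \Psi^*(\Sigma, \Delta)$, is exactly the aggregate of the per-round upper bounds in \eqref{eq:long1}.

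The \emph{main obstacle} is the second step: recognizing the linear-programming constraints as a simultaneous flow / min-cut condition and converting it, via the multicast network-coding achievability theorem, into a single network code $\mathcal{C}$ that honors all the rounded edge rates $\sigma_{i,j}$ at once. The one subtlety to check carefully there is the joint field-size requirement—ensuring a single $\mathbb{F}_q$ with $q \geq 2\epsilon B^{2\epsilon}$ simultaneously supports the multicast code and every function-update code invoked at the caching nodes—which holds because both existence statements only ask that $q$ be sufficiently large and the function-update bound dominates. The remaining steps are direct consequences of the scheme definition and the cost expression already established.
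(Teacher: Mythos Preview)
Your proposal is correct and follows essentially the same route as the paper, which simply states that the result ``follows by combining the proof of Theorem~1 of \cite{Lun} with the properties of our scheme in Section~\ref{sec:scheme}.'' You have unpacked precisely what that combination entails: the flow constraints \eqref{eq:max_const}--\eqref{eq:flow_conservation} certify the min-cut condition needed to invoke the multicast achievability of \cite{Lun}, and properties $1$--$4$ then fall out of the scheme description and the cost bound \eqref{eq:long1}. Your explicit remark on the joint field-size requirement is a detail the paper leaves implicit.
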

\begin{proof}
Follows by combining the proof of Theorem $1$ of \cite{Lun} with the properties of our scheme in Section \ref{sec:scheme}.
\end{proof}	

  It is worth noting that in the optimization problem (\ref{eq:opt_orig}), condition \eqref{eq:max_const} ensures that the rate of per-terminal virtual flows, denoted by $\widehat{\mu}_{j, i}^{(t)}$, for all $t \in \{N - L + 1, \ldots, N\}$, are no more than $\widehat{\sigma}_{j, i}$, the rate of coded flow which is transporting packets to all terminals in session. In addition, condition \eqref{eq:flow_conservation} enforces that a solution to the optimization problem satisfies the per-terminal flow conservation at each intermediary node. That is, no amount of flow is moving in or out of intermediary nodes.\color{black}
\subsection{Optimization Relaxation}
The optimization problem given in \eqref{eq:opt_orig} is a mixed-integer problem, and hence, solving for it directly is NP-Hard. To construct a convex relaxation of the \eqref{eq:opt_orig} problem, we use a technique known as multi-linear relaxation \cite{calinescu2011}, \cite{sigmetrics}, in which, we suppose that variables $\delta_i$ are independent Bernoulli random variables, with joint probability distribution $\nu$ defined in $\{0,1\}^{N}$. Let
$\kappa_i$ be the marginal probability that node $i$ chooses to cache. We have
$$\kappa_i = \mathbb{E}_{\nu}[\delta_i].$$

We further note that since $\Psi$ is linear in $\Delta$, it can be verified that  
$$\mathbb{E}_{\nu}[\Psi(\Sigma,\Delta)] = \Psi(\Sigma,\mathbb{E}_{\nu}[\Delta]) = \Psi(\Sigma,K),$$
	where $K = [\kappa_i]_{i \in \mathcal{N}} \in [0,1]^{N}$. 
	
Using multi-linear relaxation, the mixed-integer problem is turned into a continuous one, in which we seek to minimize $\mathbb{E}_{\nu}[\Psi(\Sigma,\Delta)] $ subject to the feasibility constraints.

 In addition to this relaxation, we further relax the max constraint given in \eqref{eq:max_const} using $\ell^n$-norm approximation. That is, we use $\widetilde{\sigma}_{j,i} = \left(\sum\limits_{t\in T}\left({\mu}^{(t)}_{j,i}\right)^n\right)^{\frac{1}{n}}$ as an approximation of $\widehat{\sigma}_{j,i}$. We note that a code with rate $\widetilde{\sigma}_{j,i}$ on each link $(j,i)$ exists for any feasible solution of $\widehat{\sigma}_{j,i}$, since we have $\widetilde{\sigma}_{j,i}\geq \widehat{\sigma}_{j,i}$ for all $n>0$. Also, $\widetilde{\sigma}_{j,i}$ approaches $\widehat{\sigma}_{j,i}$, as $n\rightarrow \infty$. Hence, we assume that $n$ is large in (\ref{eq:opt_orig}). 

Thus, the relaxed optimization problem follows
\begin{eqnarray}\label{eq:opt_relaxed}
& &\text{minimize } \Psi(\Sigma, K)\\ 
& &\text{  subject to } \\
& & \kappa_{\ell} \in [0, 1], 2 \leq \ell \leq N \label{eq:const_delta} \\
& & \sum_{j \in \mathcal{N}_{out}(i)} \mu_{i, j}^{(t)} - \sum_{j \in  \mathcal{N}_{in}(i)} \mu_{j, i}^{(t)} = \theta_i^{(t)}, \nonumber\\& & 1 \leq i \leq N, t \in \{N - L + 1, \ldots, N\}\label{eq:const_conservation}\\
& & \mu_{i, j}^{(t)}\geq 0 \label{eq:const_mu}
\end{eqnarray}
where 
\begin{eqnarray*}
	&&\Psi({\Sigma}, {K}):= \sum_{(i, j) \in \mathcal{E}} f_{i, j}\left(\left(\sum\limits_{t\in T}\left({\mu}^{(t)}_{i,j}\right)^n\right)^{\frac{1}{n}}\right) + \\ &&(M-1)\sum_{i = 2}^{N}\biggl( (1-{\kappa}_i)\sum_{j \in \mathcal{N}_{in}(i)}f_{ j, i}\left(\left(\sum\limits_{t\in T}\left({\mu}^{(t)}_{j,i}\right)^n\right)^{\frac{1}{n}}\right) \\&+&{\kappa}_i\biggl(f_i(\widetilde{\sigma}_{i}) + \sum_{j \in \mathcal{N}_{in}(i)}f_{j, i}\left(2\epsilon\right)\biggr)\biggr),
\end{eqnarray*}
and, $\widetilde{\sigma}_{i} =\sum\limits_{\ell \in \mathcal{N}_{out}(i)}\left(\sum\limits_{t\in T}\left({\mu}^{(t)}_{i,\ell}\right)^n\right)^{\frac{1}{n}}$.

The relaxed problem is now a convex problem which can be solved using any standard optimization algorithm. In the next section, we describe primal-dual algorithms to solve this problem, which can be implemented in an offline or online setting. We later discuss how to recover solutions with integer $\kappa_i$ from the fractional solutions. 
\subsection{Primal-Dual Algorithms}
Following the scheme used in \cite{Lun}, we apply primal-dual scheme to obtain optimal solutions for the relaxed optimization problem. We note that $\Psi (\Sigma,K)$ is strictly convex in $\mu$'s, but linear in $\kappa$'s. 

The Lagrangian function of $\Psi (\Sigma,K)$ is established as
\begin{eqnarray}
& & L(\Sigma,K,P,\Lambda,\Gamma) = \Psi (\Sigma,K)+\sum\limits_{i\in \mathcal{N},t\in T}p_i^{(t)}(y_i^{(t)} -\theta_i^{(t)}) \nonumber\\&& -\sum\limits_{(i,j)\in \mathcal{E},t\in T}\lambda_{i,j}^{(t)}\mu_{i,j}^{(t)} - \sum\limits_{i=2}^{N} \kappa_i\gamma_i^- + \sum\limits_{i=2}^{N} (\kappa_i-1)\gamma_i^+.
\end{eqnarray}
where $\gamma_i^-$, $\gamma_i^+$, $p_i^{(t)}$ and $\lambda_{i,j}^{(t)}$ are, respectively, the Lagrange multipliers associated with constraints \eqref{eq:const_delta},\eqref{eq:const_delta}, \eqref{eq:const_conservation}, and \eqref{eq:const_mu}. In addition,
\begin{equation}
y^{(t)}_{i}:= \sum\limits_{\{j\in \mathcal{N}_{out}(i)\}}  \mu^{(t)}_{i,j} - \sum\limits_{\{j\in \mathcal{N}_{in}(i)\}}  \mu^{(t)}_{j,i} . \label{eq:yi}
\end{equation}

Let $(\widehat{\Sigma},\widehat{K},\widehat{P},\widehat{\Lambda},\widehat{\Gamma})$ be a solution for the relaxed problem, then the following Karush-Kuhn-Tucker conditions can be verified to hold:
\begin{equation}
\begin{split}
\frac{\partial L(\widehat{\Sigma},\widehat{K},\widehat{P},\widehat{\Lambda},\widehat{\Gamma})}{\partial \mu^{(t)}_{i,j}} = \frac{\partial \Psi(\widehat{\Sigma},\widehat{K})}{\partial \mu^{(t)}_{i,j}} + (\widehat{p}^{(t)}_{i}-\widehat{p}^{(t)}_{j})- \widehat{\lambda}^{(t)}_{ij} = 0,\\
\forall (i,j)\in \mathcal{E},t\in T,
\end{split}
\label{eq:kkt-lag_mu}
\end{equation}
\begin{equation}
\frac{\partial L(\widehat{\Sigma},\widehat{K},\widehat{P},\widehat{\Lambda},\widehat{\Gamma})}{\partial \kappa_{i}} = \frac{\partial \Psi(\widehat{\Sigma},\widehat{K})}{\partial\kappa_{i}} + \gamma_i^+ - \gamma_i^-  = 0,\quad
\forall i \in \mathcal{N},
\label{eq:kkt-lag_delta}
\end{equation}
\begin{equation}
\begin{split}
 \sum_{j \in \mathcal{N}_{out}(i)} \widehat{\mu}_{i, j}^{(t)} - \sum_{j \in  \mathcal{N}_{in}(i)} \widehat{\mu}_{j, i}^{(t)} = \theta_i^{(t)}, \quad\forall i \in \mathcal{N}, t \in T
\end{split}
\label{eq:kkt-conserv}
\end{equation}
\begin{equation}
 \widehat{\mu}^{(t)}_{i,j}\geq 0,\qquad \widehat{\lambda}^{(t)}_{i,j}\geq 0, \quad\forall (i,j)\in \mathcal{E},,t\in T,\label{eq:kkt-muandl}
\end{equation}
\begin{equation}
 0\leq \widehat{\kappa}_i \leq 1, \qquad\widehat{\gamma}_i^-\geq 0, \qquad \widehat{\gamma}_i^+\geq 0, \quad\forall i \in \mathcal{N} \label{eq:kkt_deltaandgamma}
\end{equation}
\begin{equation}
 \widehat{\mu}^{(t)}_{ij}\widehat{\lambda}^{(t)}_{ij}= 0,\quad \forall (i,j)\in \mathcal{E},t\in T,\label{eq:kkt-mul}
\end{equation}
\begin{equation}
 \widehat{\kappa}_i\widehat{\gamma}_i^- =0,\qquad (\widehat{\kappa}_i-1)\widehat{\gamma}_i^+ =0, \quad\forall i \in \mathcal{N}. \label{eq:kkt_deltagamma}
\end{equation}

The derivatives of $\Psi$ with respect to $\mu$'s and $\kappa$'s can be, respectively, obtained by
\begin{eqnarray}\label{eq:psi-deriv-mu}
	&&\frac{\partial \Psi(\Sigma,K)}{\partial \mu^{(t)}_{i,j}} =\left(\frac{\mu_{i,j}^{(t)}}{\widetilde{\sigma}_{i,j}}\right)^{n-1}\biggl(f'_{i,j}(\widetilde{\sigma}_{i,j})\nonumber\\&&+(\Sigma-1)\left[\kappa_if_i'(\widetilde{\sigma}_i)+(1-\kappa_j)f_{i,j}'(\widetilde{\sigma}_{i,j})\right]
	\biggr),
\end{eqnarray}
\begin{eqnarray}\label{eq:psi-deriv-kappa}
\frac{\partial \Psi(\Sigma,K)}{\partial \kappa_i}&=&(M-1)\biggl(f_i(\widetilde{\sigma}_i)\nonumber\\&+&\sum\limits_{j\in \mathcal{N}_{in}(i)}\left[f_{j,i}(2\epsilon)-f_{j,i}(\widetilde{\sigma}_{j,i})\right]
\biggr).
\end{eqnarray}

We can now specify the continuous-time primal-dual algorithm which is proved in Theorem. \ref{theorem:stability} to converge to the globally optimal solution of the relaxed problem, for any initial choice of $(\Sigma,K,P)$.

\begin{equation}
\dot{\mu}^{(t)}_{i,j} = k^{(t)}_{i,j}(\mu^{(t)}_{i,j})\left(-\frac{\partial \Psi}{\partial \mu^{(t)}_{i,j}} - q^{(t)}_{i,j}+ \lambda^{(t)}_{i,j}\right).
\label{eq:mudot}
\end{equation}
\begin{equation}
\dot{\kappa}_i = h_i(\kappa_i)\left(-\frac{\partial \Psi}{\partial \kappa_i} -\gamma_i^+ +\gamma_i^-\right).
\label{eq:deltadot}
\end{equation}
\begin{equation}
\dot{p}^{(t)}_{i} = g^{(t)}_{i}(p^{(t)}_{i})\left(y^{(t)}_{i}-\theta^{(t)}_{i}\right).
\label{eq:pdot}
\end{equation}
\begin{equation}
\dot{\lambda}^{(t)}_{i,j} = m^{(t)}_{i,j}(\lambda^{(t)}_{i,j})\left(-\mu^{(t)}_{i,j}\right)_{\lambda^{(t)}_{i,j}}^+.
\label{eq:lambdadot}
\end{equation}
\begin{equation}
\dot{\gamma}^-_{i} = \alpha_i({\gamma}^-_{i})\left(-\kappa_i\right)_{{\gamma}^-_{i}}^+.
\label{eq:gamma-dot}
\end{equation}
\begin{equation}
\dot{\gamma}^+_{i} = \beta_i({\gamma}^+_{i})\left(\kappa_i-1\right)_{{\gamma}^+_{i}}^+.
\label{eq:gamma+dot}
\end{equation}
where $k^{(t)}_{i,j}(\mu^{(t)}_{i,j})$, $h_i(\kappa_i)$, $g^{(t)}_{i}(p^{(t)}_{i})$, $m^{(t)}_{i,j}(\lambda^{(t)}_{i,j})$, $\alpha_i({\gamma}^-_{i})$, $\beta_i({\gamma}^+_{i})$ are positive, non-decreasing continuous functions of $\mu^{(t)}_{i,j}$, $\kappa_i$, $p^{(t)}_{i}$, $\lambda^{(t)}_{i,j}$, ${\gamma}^-_{i}$ and ${\gamma}^+_{i}$, respectively, and
\begin{equation}
q^{(t)}_{i,j} \triangleq p^{(t)}_{i} -p^{(t)}_{j}.\label{eq:qij}
\end{equation}
\begin{equation}
(y)_x^+ := \begin{cases}
\ y, \text{ if } x>0,\\
\ \max\{y,0\}, \text{ if } x\leq 0.
\end{cases}
\end{equation}

In the next theorem, it is shown that the primal-dual algorithm converges to a globally optimal solution of the relaxed problem, for any initial choice of $(\Sigma,K,P)$, provided that we initialize $\Lambda$, $\Gamma^-$ and $\Gamma^+$ with non-negative entries.

\begin{thm}
	\label{theorem:stability}
	The primal-dual algorithm described by eq.s (\ref{eq:mudot})-(\ref{eq:gamma+dot}) converges to a globally optimal solution of the relaxed problem, for any initial choice of $(\Sigma,K,P)$.
\end{thm}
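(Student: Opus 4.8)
The plan is to follow the Lyapunov-stability argument used for Theorem~1 of \cite{Lun}, adapting it to the additional caching variables $\kappa_i$ and their multipliers $\gamma_i^-,\gamma_i^+$. The strategy has three parts: (i) identify the equilibria of the system (\ref{eq:mudot})--(\ref{eq:gamma+dot}) with the KKT points of the relaxed problem, which by convexity are exactly its global optima; (ii) exhibit a Lyapunov function whose derivative along trajectories is non-positive; and (iii) invoke LaSalle's invariance principle to conclude convergence to the optimal set.

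First I would characterize the equilibria. The relaxed problem (\ref{eq:opt_relaxed}) is convex: $\Psi$ is strictly convex in the $\mu$'s and linear in the $\kappa$'s, and all constraints (\ref{eq:const_delta})--(\ref{eq:const_mu}) are linear, so it is feasible with a nonempty optimal set and, the constraints being affine, strong duality holds with the KKT conditions (\ref{eq:kkt-lag_mu})--(\ref{eq:kkt_deltagamma}) both necessary and sufficient for global optimality. A point is an equilibrium of (\ref{eq:mudot})--(\ref{eq:gamma+dot}) precisely when every right-hand side vanishes; since the scaling functions $k_{i,j}^{(t)},h_i,g_i^{(t)},m_{i,j}^{(t)},\alpha_i,\beta_i$ are strictly positive, this forces the bracketed terms to satisfy exactly the stationarity, primal-feasibility, and complementary-slackness relations (\ref{eq:kkt-lag_mu})--(\ref{eq:kkt_deltagamma}) (using the derivative formulas (\ref{eq:psi-deriv-mu}), (\ref{eq:psi-deriv-kappa}), and noting that the projection $(\cdot)_x^+$ collapses to the complementarity condition (\ref{eq:kkt-mul}) at the boundary). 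Hence equilibria and global optima coincide; fix one such optimum $(\Sigma^*,K^*,P^*,\Lambda^*,\Gamma^*)$.

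Next I would build the weighted Lyapunov function
\begin{eqnarray*}
V &=& \sum_{(i,j),t} \int_{\mu_{i,j}^{(t)*}}^{\mu_{i,j}^{(t)}} \frac{s-\mu_{i,j}^{(t)*}}{k_{i,j}^{(t)}(s)}\,ds + \sum_i \int_{\kappa_i^*}^{\kappa_i} \frac{s-\kappa_i^*}{h_i(s)}\,ds \\
& & {}+ \sum_{i,t} \int_{p_i^{(t)*}}^{p_i^{(t)}} \frac{s-p_i^{(t)*}}{g_i^{(t)}(s)}\,ds + (\text{analogous terms in }\lambda,\gamma^-,\gamma^+),
\end{eqnarray*}
which is non-negative, vanishes only at the chosen equilibrium, and is radially unbounded (each integrand shares the sign of $s-z^*$ and the weights are positive). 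Differentiating along (\ref{eq:mudot})--(\ref{eq:gamma+dot}), the positive weights cancel the scaling functions, leaving $\dot V$ as a sum of terms of the form $(z-z^*)\cdot(\text{signed partial of the Lagrangian})$. Reading these as $-(x-x^*)^\top\nabla_x L$ on the primal block together with $(\text{dual}-\text{dual}^*)^\top\nabla L$ on the dual block, and using that $L$ is convex in $(\Sigma,K)$ and linear (hence concave) in $(P,\Lambda,\Gamma)$, the first-order convexity/concavity inequalities collapse $\dot V$ to $L(\text{primal}^*,\text{dual}) - L(\text{primal},\text{dual}^*)$, which the saddle-point property of the optimum bounds above by $0$; the projections $(\cdot)_x^+$ only discard boundary-violating components and are checked to keep $\dot V\le 0$.

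Finally I would apply LaSalle's invariance principle on a compact sublevel set $\{V\le c\}$: trajectories approach the largest invariant subset of $\{\dot V = 0\}$. Strict convexity of $\Psi$ in the $\mu$'s forces $\mu\to\Sigma^*$ uniquely, and on the invariant set the remaining KKT relations (\ref{eq:kkt-lag_mu})--(\ref{eq:kkt_deltagamma}) must hold, so every limit point is a global optimum. The main obstacle is exactly the lack of strict convexity in the caching variables (the objective is only linear in $\kappa$) and the linearity of the Lagrangian in all the dual variables: here $\dot V = 0$ does not by itself single out a point, so the delicate step is the invariant-set analysis — combined with careful bookkeeping of the non-smooth projections $(\cdot)_x^+$ at the constraint boundaries — to ensure that the $\omega$-limit set consists entirely of optima rather than a spurious invariant trajectory.
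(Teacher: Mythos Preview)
Your proposal is correct and follows essentially the same Lyapunov argument as the paper: the same weighted-integral Lyapunov function centered at a KKT point, the same handling of the projections $(\cdot)_x^+$ via the inequality $(-\mu)^+_\lambda(\lambda-\widehat\lambda)\le -\mu(\lambda-\widehat\lambda)$, and the same use of convexity in $\Sigma$ and linearity in $K$ to conclude $\dot V\le 0$. The only cosmetic differences are that the paper carries out the $\dot V$ computation explicitly (obtaining the concrete bound $\dot V\le -\Lambda'\widehat\Sigma-(\mathbf 1-\widehat K)'\Gamma^+-\widehat K'\Gamma^-$ and using nonnegativity of $\Lambda,\Gamma^\pm$), whereas you package the same estimate abstractly via the saddle-point inequality, and you are more explicit than the paper in flagging that LaSalle's invariance principle is what actually yields convergence given the lack of strict convexity in $\kappa$ and the duals.
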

\begin{proof}
See Appendix \ref{app:convergence}
\end{proof}
\subsection{Recovering Integer Solutions}
As shown in previous sections, the primal-dual algorithm converges to a globally optimal solution of the relaxed problem. In this section, we describe a randomized rounding scheme in which an integral solution for $\Delta$ is recovered, such that $\mathbb{E}_{\nu}[\Delta] = K$. Since the $\delta_i$'s are assumed to be independent Bernoulli random variables with parameter $\kappa_i$, each node $i$ can simply construct such an integer solution in a distributed manner by setting $\delta_i =1$ with probability $\kappa_i$, and $\delta_i =0$ otherwise. Hence, each node, independently from other nodes, can decide to cache or not based on the solution obtained by the primal-dual algorithm.

\subsection{Distributed Implementation of the Primal Dual Algorithm}
The continuous-time primal-dual algorithm given in \eqref{eq:mudot}-\eqref{eq:gamma+dot} can be easily discretized. We omit the discrete-time versions for brevity. The algorithm can be also implemented in a distributed manner as follows: as shown in \eqref{eq:psi-deriv-mu} and \eqref{eq:psi-deriv-kappa}, respectively, the derivatives of $\Psi$ with respect to $\mu$ and $\kappa$'s can be computed at each node using only local information. Hence, node $i$ can update variables $\mu_{i,j}^{(t)}$, $\kappa_i$, $p^{(t)}_i$, $\lambda_{i,j}^{(t)}$, $\gamma_i^-$, $\gamma_i^+$, $\forall j\in \mathcal{N}_{out}(i),t\in T$ at each iteration. After the convergence of the algorithm, then each node can use the rounding scheme, described in the previous section, to obtain $\delta_i$. Using random linear coding in a distributed manner 
\cite{Lun}, will result in a fully distributed approach.

\section{Simulation Experiments} \label{sec:exp}
In this section, we present our simulation results of the proposed Primal-Dual algorithm for various network topologies and scenarios. These results are based on the implementation of the described scheme in MATLAB environment.  We note that since, to the best of our knowledge, there are no competing schemes for the oblivious file update problem, we are not able to compare the performance of the proposed scheme with that of any competing schemes. Hence, we focus on the performance of our scheme under different network scenarios. In particular, we consider caching at the edge or at the peers, and compare the results in the case of 1) no caching (No Caching), 2) only at the edge (Edge Caching), 3) caching at the terminals (Peer Caching), 4) caching at both terminals and edge nodes (Edge+Peer Caching).

In addition, we consider different types of cost functions associated with caching functionality. We focus on caching with a) no cost, b)linear cost function, c) quadratic cost function. In all these cases, we consider the link costs to be of form 
\begin{equation}
f_{ij}(\sigma_{ij})= \frac{\sigma_{ij}}{c_{ij}-\sigma_{ij}} , \text{ for }\sigma_{ij} <c_{ij}, \forall (i,j)\in \mathcal{E},
\end{equation}
which gives the expected number of packets waiting for or under transmission at link $(i,j)\in\mathcal{E}$ under an
$M/M/1$ queuing model.  
\subsection{Simulation Details}
We consider three topologies for our experiments, as depicted in Fig. \ref{fig:topos}. We note that the capacity of each link in the illustrated topologies are shown next to the links. In the Butterfly Network (Fig. \ref{topo:butterfly}), the source node $s$ is sending packets to terminals $t_1$ and $t_2$. Nodes 3 and 4 have local storage, in addition to the terminals. In the Service Network (Fig. \ref{topo:sn}) the source node $s$ wishes to multicast coded packets to terminals $t_1$, $t_2$, and $t_3$. In this topology, node 2 as well as the terminals have local storage. Finally, in the CDN Network (Fig. \ref{topo:cdn}), the source node $s$ is sending packets to terminals $t_1$ and $t_2$. In this topology, nodes 5 and 6 have local storage, in addition to the terminals. 

In all the network scenarios, we assume the number of rounds $M=100$. Also, at the rounds $2\leq r\leq M$, the amount of change in the files is limited to 1\% of the file size, $B$.

\begin{figure*}[h]
	\begin{subfigure}[t]{0.33\textwidth}
		\centering
		\includegraphics[scale=0.35]{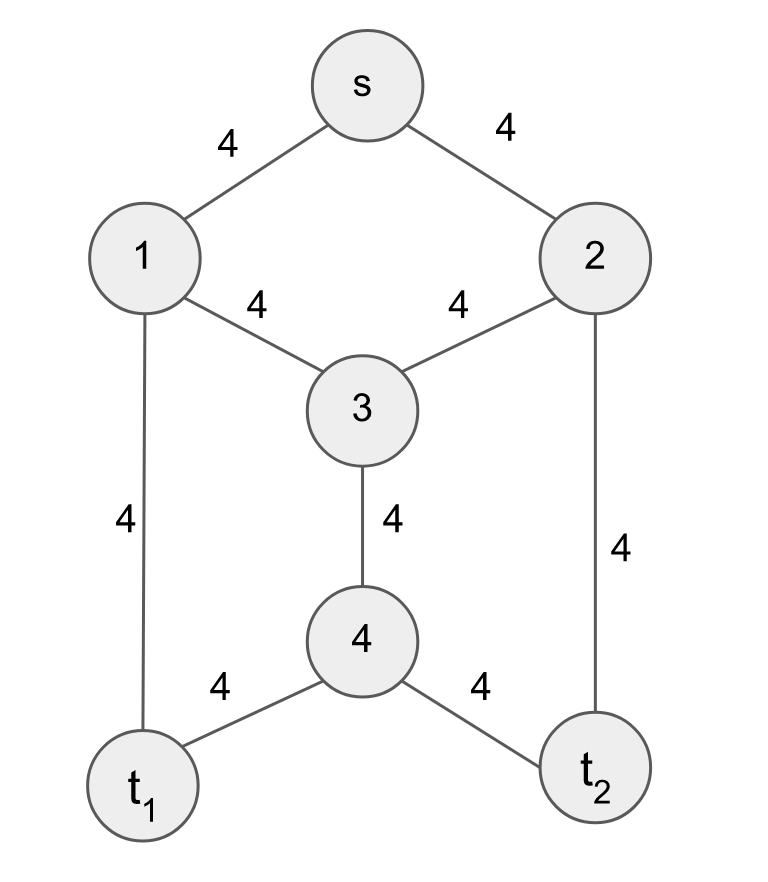}
		\caption{Butterfly Network.}
		\label{topo:butterfly}
	\end{subfigure}%
	\begin{subfigure}[t]{0.33\textwidth}
		\centering
		\includegraphics[scale=0.3]{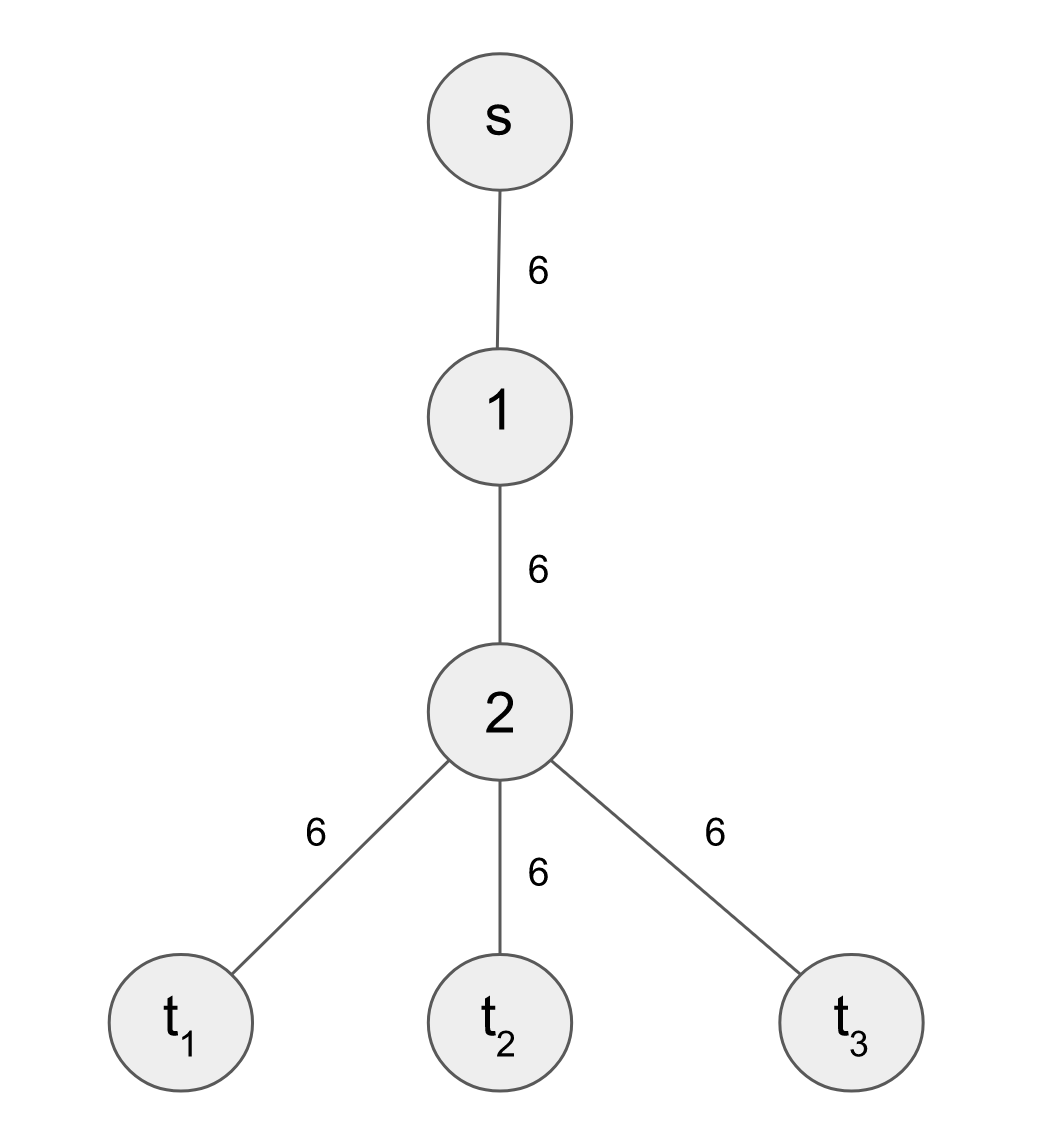}
		\caption{Service Network.}
		\label{topo:sn}
	\end{subfigure}
	\begin{subfigure}[t]{0.33\textwidth}
		\centering
		\includegraphics[scale=0.35]{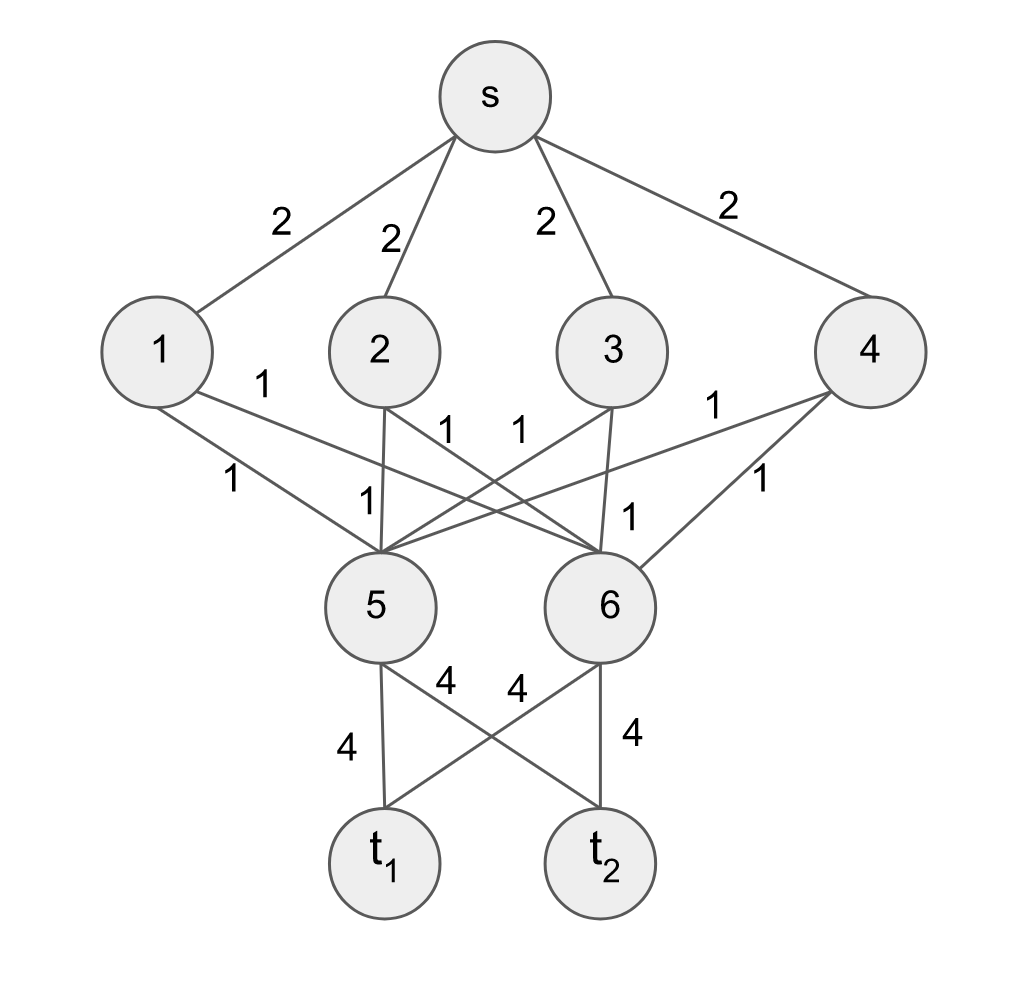}
		\caption{CDN Network.}
		\label{topo:cdn}
	\end{subfigure}	
	\caption{Experimented Topologies.}
	\label{fig:topos}
\end{figure*}

\subsection{Simulation Results}
Figure \ref{fig:butterfly} shows the convergence of the Primal-Dual Algorithm under different caching scenarios and different caching cost functions when $B=3.6$. Fig. \ref{fig:butterflyn} shows the convergence of the algorithm when there is no cost associated with caching functionality. In this case, the No Caching scenario has the highest total network cost, and the Peer+Edge Caching has the lowest. This is unsurprising, as caching will reduce the total network cost. Furthermore, the Peer Caching case has a lower cost comparing to the Edge Caching. 

Figures \ref{fig:butterflyl}, and \ref{fig:butterflyq} shows the convergence of the algorithm under linear and quadratic caching cost functions, respectively. In both cases, it can be verified that the caching at the edge is prevented due to the increased caching costs. That is why the performance of the No Caching case is similar to that of the Edge Caching one. Similarly, the performance of the Peer+Edge Caching case resembles that of the Peer Caching one. We note that as the $B$ increases and approaches the max-flow min-cut, the caching cost will be justified by the increasing link costs. Hence, it will be more likely to have caches in the edge.

This trend is similarly seen in Service Network and CDN Network, as seen in Figures \ref{fig:sn} and \ref{fig:cdn}, respectively. That is, edge caching is often prevented under low to medium traffic rate, when a (linear or quadratic) cost function associated with caching functionality is taken into account. However, as traffic is increased, the caching cost is justified to keep the link costs under control.

In Fig. \ref{fig:snn}, the performance of the proposed scheme on the Service Network under no caching cost function is illustrated. Similar to that on the Butterfly Network, the network cost under Peer Caching is lower than that under Edge Caching. Similarly, the network cost under the No Caching is the highest, while the it is the lowest under the Peer+Edge Caching. This is also true for the CDN Network, depicted in Fig. \ref{fig:cdnn}. However, it can been seen that the network cost under the Edge Caching is lower than that under Peer Caching. This behavior is the opposite of that for the Butterfly and Service Networks. 
\begin{figure*}[h]
	\begin{subfigure}[t]{0.33\textwidth}
		\centering
		\includegraphics[width =1\textwidth, height =.8\textwidth ]{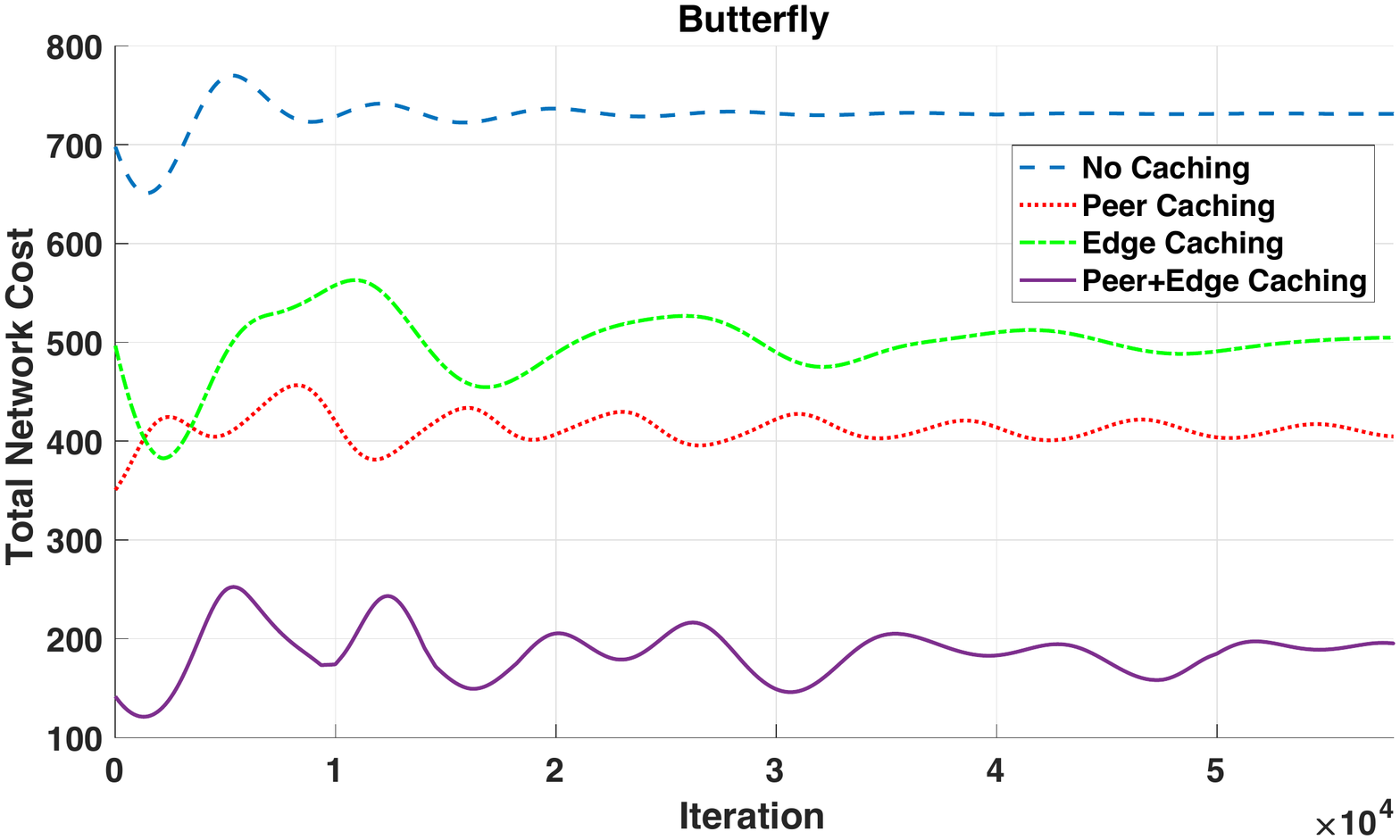}
		\caption{No Caching Cost.}
		\label{fig:butterflyn}
	\end{subfigure}%
	\begin{subfigure}[t]{0.33\textwidth}
		\centering
		\includegraphics[width =1\textwidth, height =.8\textwidth ]{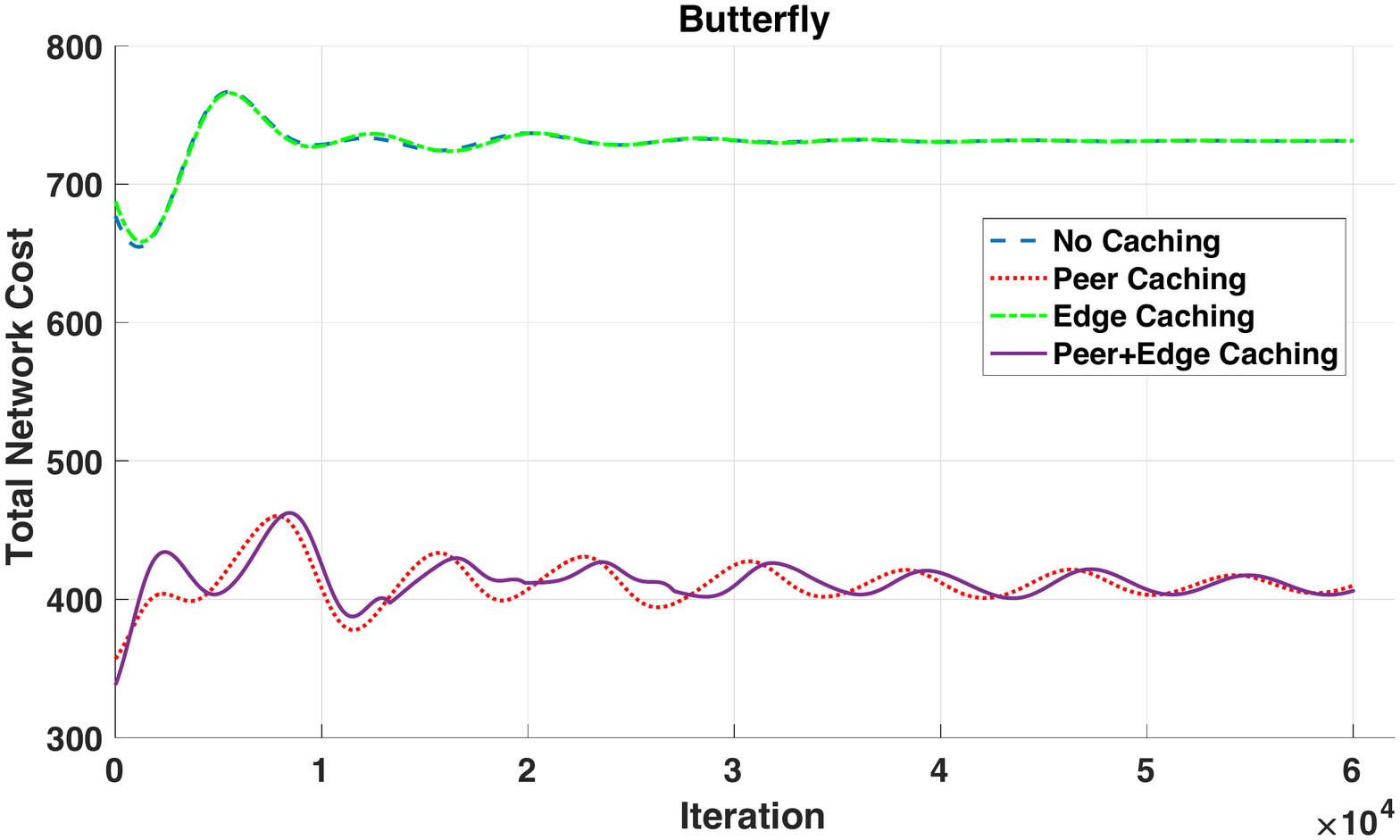}
		\caption{Linear Caching Cost.}
				\label{fig:butterflyl}
	\end{subfigure}
	\begin{subfigure}[t]{0.33\textwidth}
		\centering
		\includegraphics[width =1\textwidth, height =.8\textwidth ]{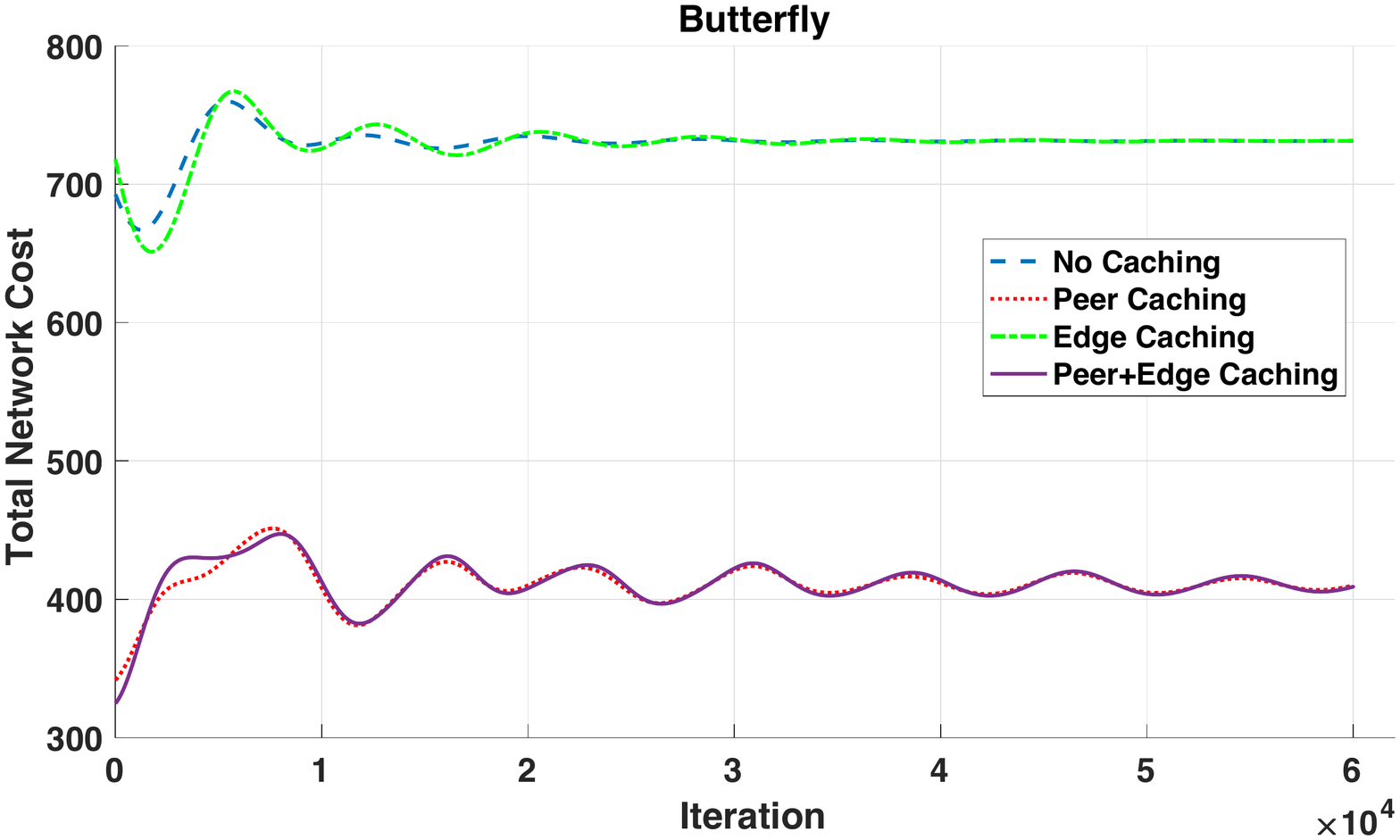}
		\caption{Quadratic Caching Cost.}
				\label{fig:butterflyq}
	\end{subfigure}	
	\caption{Convergence of the Primal-Dual Algorithm on the Butterfly Network under different caching cost functions and scenarios when $B=3.6$.}
	\label{fig:butterfly}
\end{figure*}
\begin{figure*}[h]
	\begin{subfigure}[t]{0.33\textwidth}
		\centering
		\includegraphics[width =1\textwidth, height =.8\textwidth ]{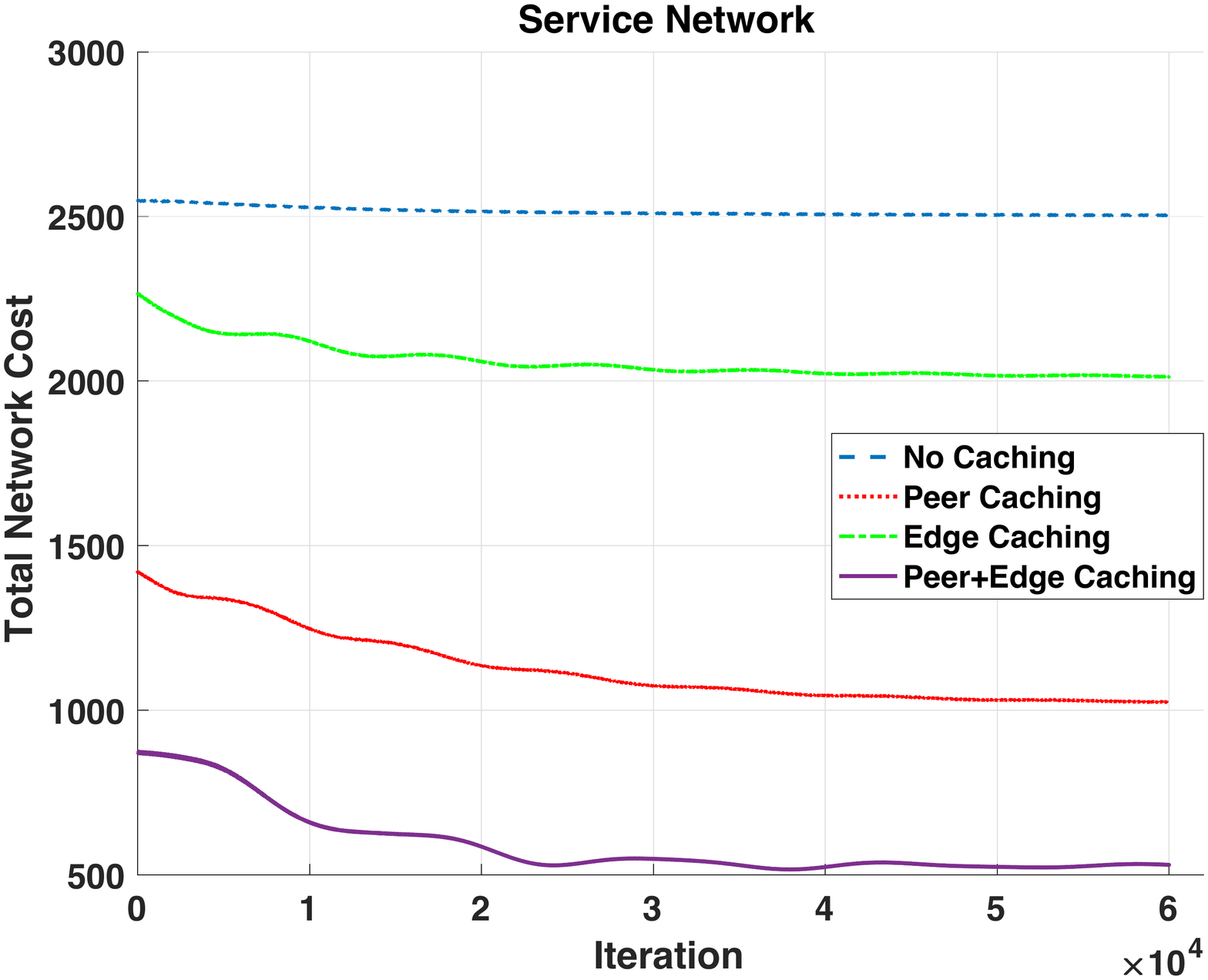}
		\caption{No Caching Cost.}
		\label{fig:snn}
	\end{subfigure}%
	\begin{subfigure}[t]{0.33\textwidth}
		\centering
		\includegraphics[width =1\textwidth, height =.8\textwidth ]{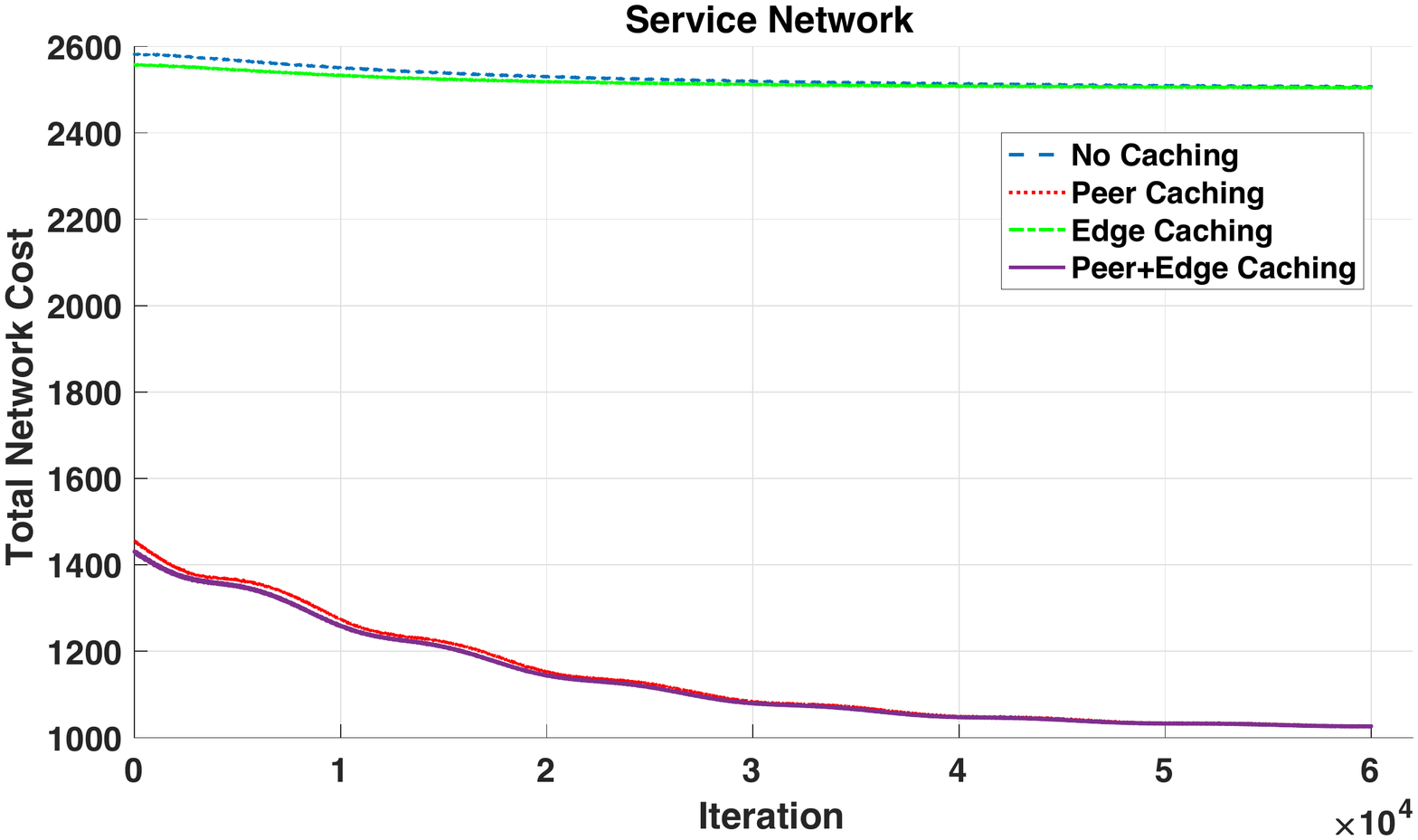}
		\caption{Linear Caching Cost.}
		\label{fig:snl}
	\end{subfigure}
	\begin{subfigure}[t]{0.33\textwidth}
		\centering
		\includegraphics[width =1\textwidth, height =.8\textwidth ]{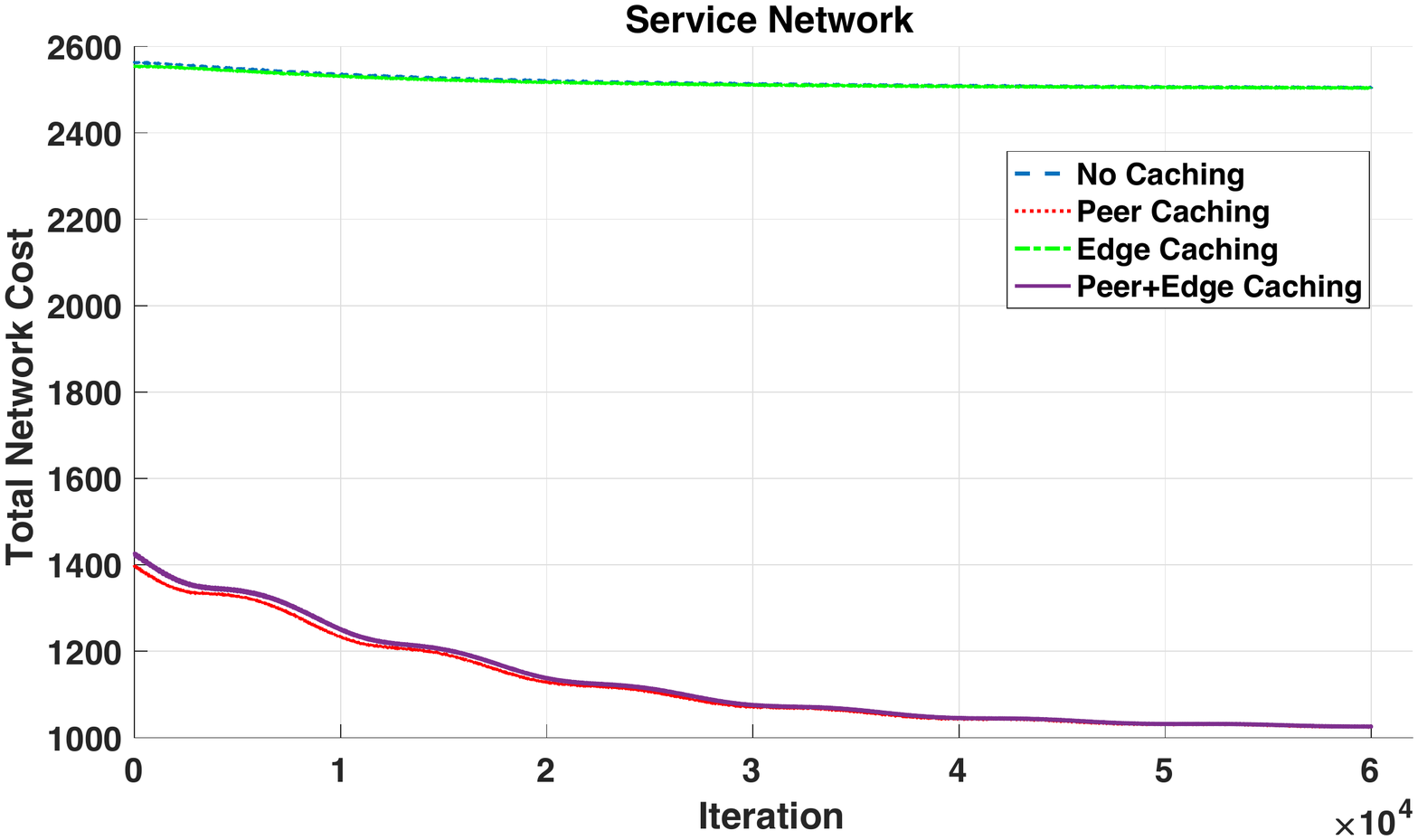}
		\caption{Quadratic Caching Cost.}
		\label{fig:snq}
	\end{subfigure}	
	\caption{Convergence of the Primal-Dual Algorithm on the Service Network under different caching cost functions and scenarios when $B=5$.}
	\label{fig:sn}
\end{figure*}
\begin{figure*}[h]
	\begin{subfigure}[t]{0.33\textwidth}
		\centering
		\includegraphics[width =1\textwidth, height =.8\textwidth ]{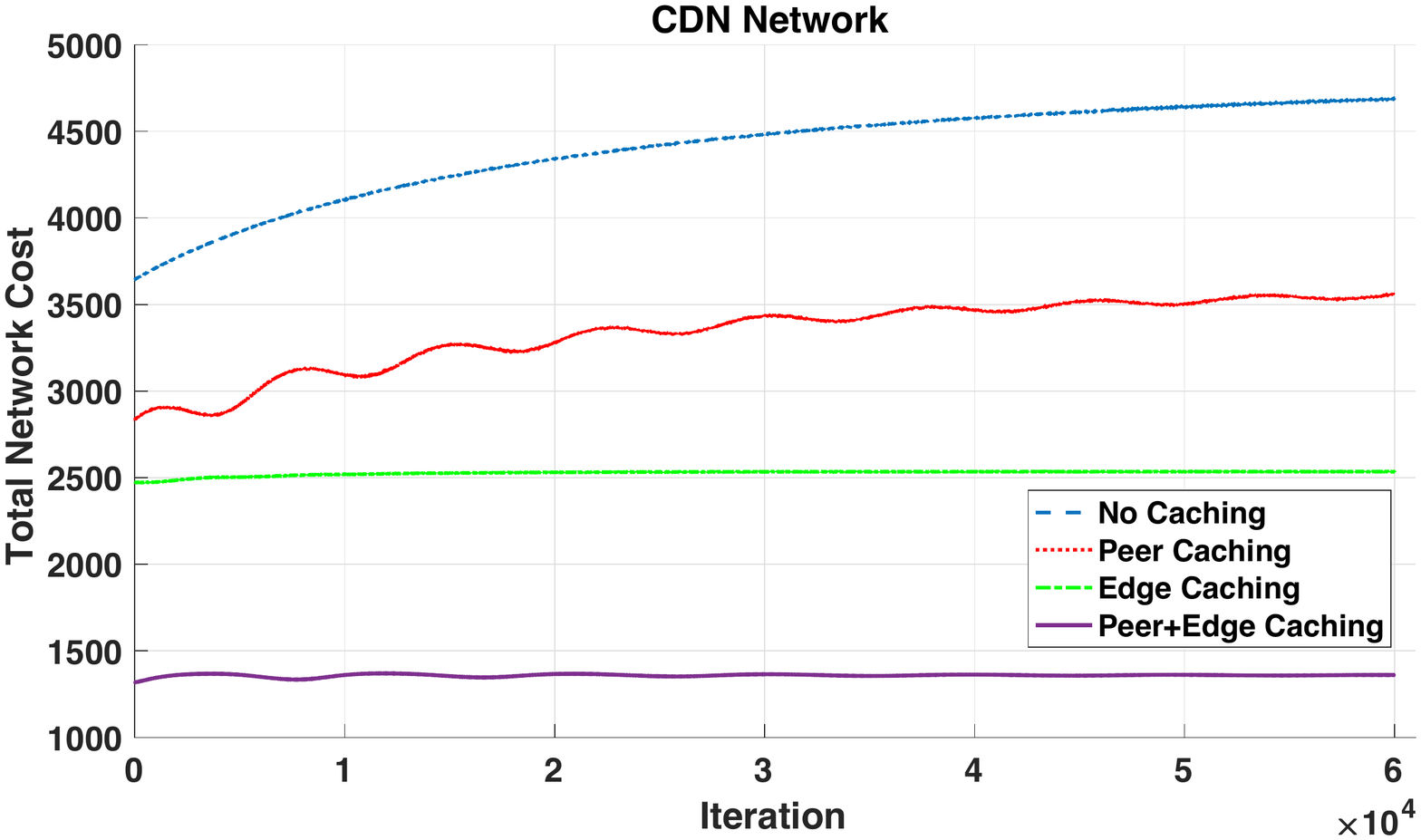}
		\caption{No Caching Cost.}
		\label{fig:cdnn}
	\end{subfigure}%
	\begin{subfigure}[t]{0.33\textwidth}
		\centering
		\includegraphics[width =1\textwidth, height =.8\textwidth ]{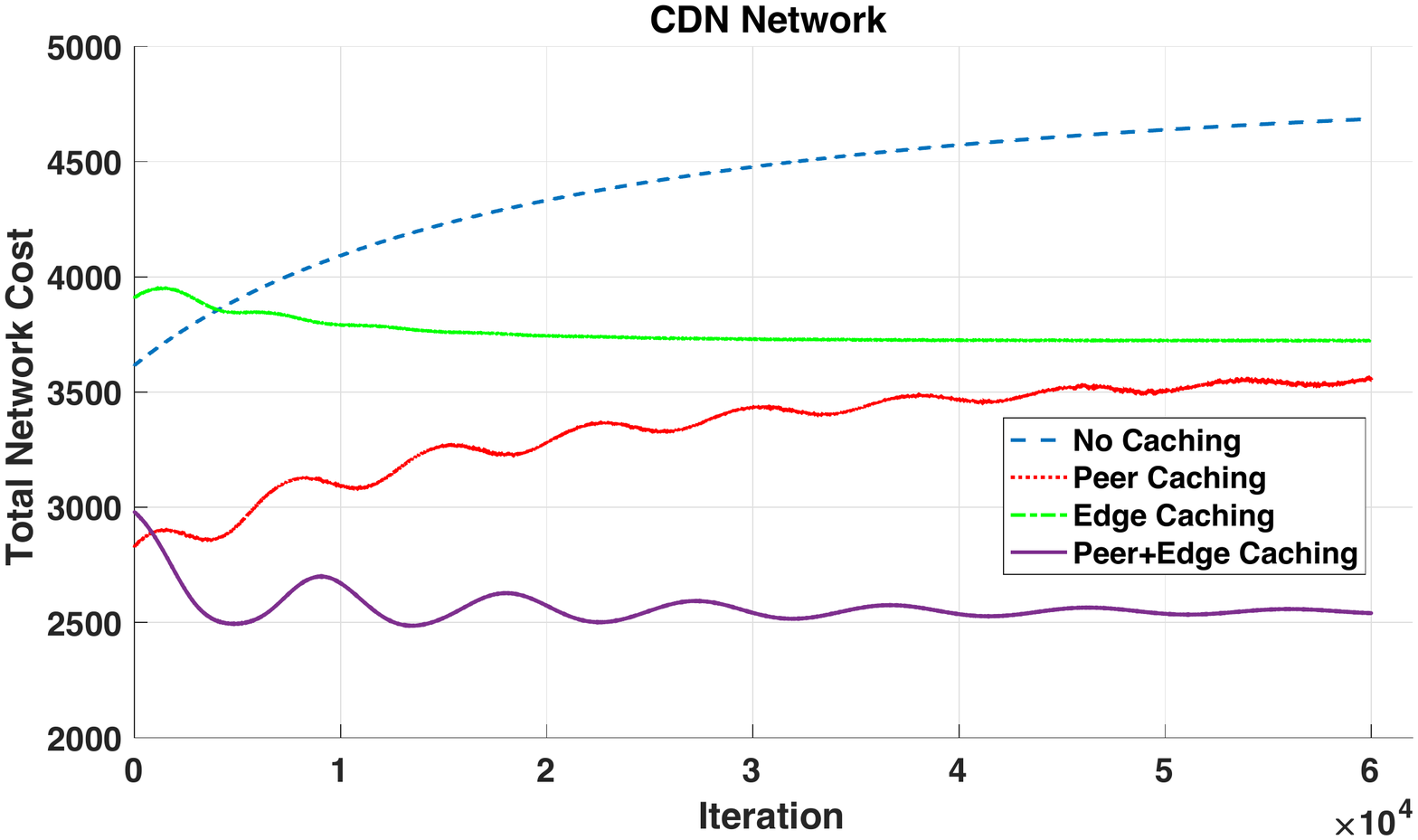}
		\caption{Linear Caching Cost.}
		\label{fig:cdnl}
	\end{subfigure}
	\begin{subfigure}[t]{0.33\textwidth}
		\centering
		\includegraphics[width =1\textwidth, height =.8\textwidth ]{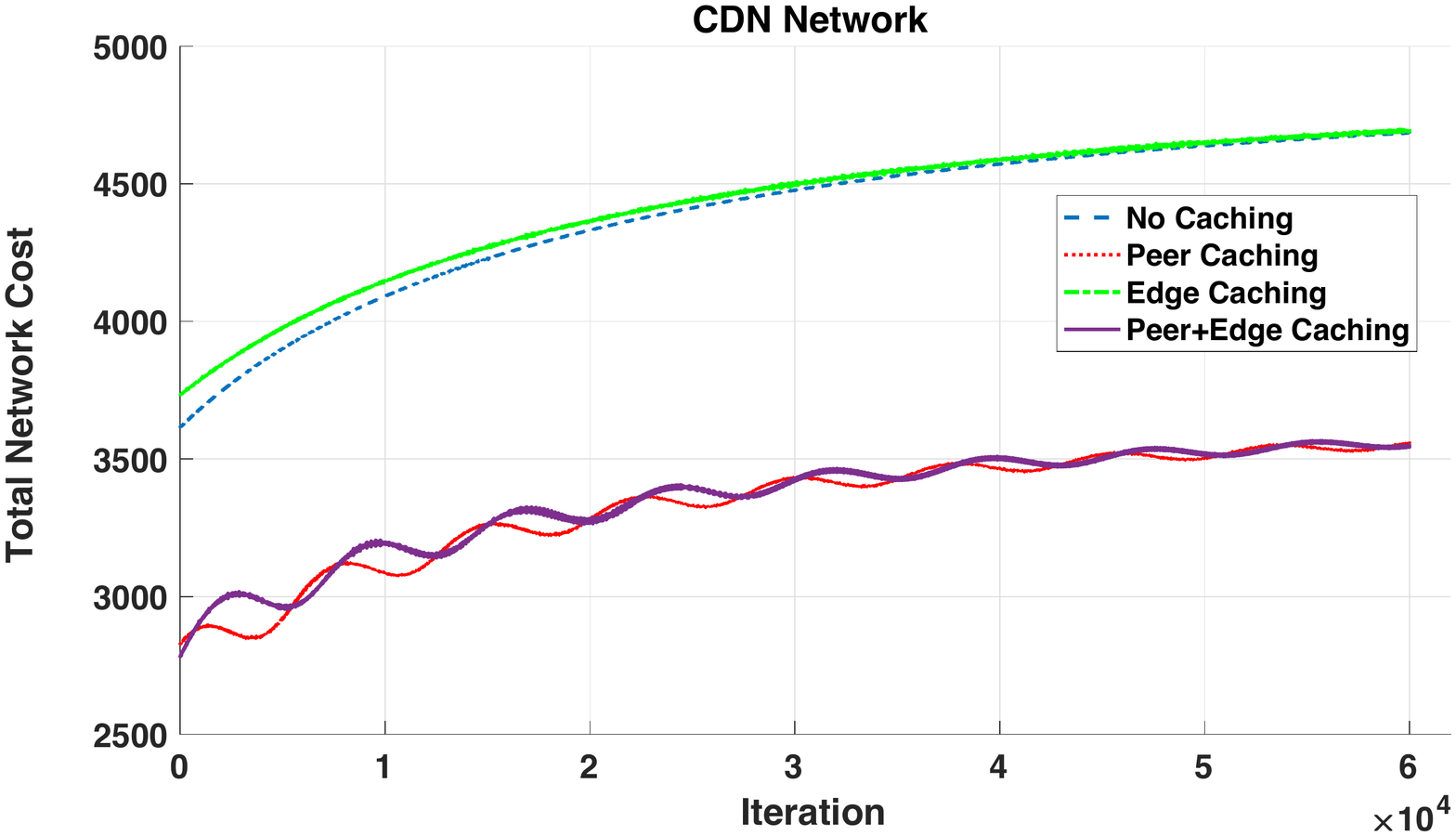}
		\caption{Quadratic Caching Cost.}
		\label{fig:cdnq}
	\end{subfigure}	
	\caption{Convergence of the Primal-Dual Algorithm on the CDN Network under different caching cost functions and scenarios when $B=6$.}
	\label{fig:cdn}
\end{figure*}

\subsection{Cache Placement Experiments}
In the simulation experiments explained in the previous section, it was assumed that the nodes with local storage are fixed. However, choosing the most effective placement of these caches in terms of cost reduction in an arbitrary network topology is not a simple task. We tackle this problem 
by allowing all the nodes in the network, except the source node, to cache and choose the one(s) with highest caching variables. To reduce the randomness, the caching variables are averaged over 40 independent run of the simulation with different random generator seeds. We look at linear and quadratic caching cost functions as examples to study the cache placement problem. Clearly, the case where there is no cost associated with caching functionality will lead to have all caching variables equal to 1. The results are reported in Table \ref{table} for CDN Network and Service Network topologies. 
For the Service Network, as reported in Table \ref{tab:sn}, we note that the average caching variable of node 2 is 0 while, that of node 1 is equal to 1. Hence, it is preferred to cache closer to the source node rather than the edge. This is likely because the number of outgoing link of node 2 is three time greater than that of node 1.  Hence, the required packets to be stored at node 2 is three times larger than that at node 1. For the quadratic caching cost function, we note that the caching variables for both nodes 1 and 2 are zeros. For the both cases, the caching variables for the terminals remain equal to 1. This inertia demonstrates the importance of peer caching in our model. 

For CDN Network, as shown in Table \ref{tab:cdn}, the solution of the caching variables for the edge nodes in the network (nodes 5 and 6) are very close to 1 for linear cost functions, while they are equal to 0 for the quadratic cost functions. Furthermore, it can be seen that the solutions for the caching variables of nodes 1, 2, 3, and 4 are quite similar to each other for both cost functions. This is expected due to the symmetry of the topology. We also note that the solutions for these nodes are all lower for the quadratic case than that for the linear case. This results, along the ones for the Service Network, indicate that depending on the choice of the caching cost function and the topology, we might be better off caching closer to the source vs. at the edge of the network. In particular, in the case of quadratic caching cost functions for the CDN Network, the total network cost is minimized when there are caches at node 1-4.

\begin{table}[ht]
		\caption{Results of Cache Placement Experiment on Service Network and CDN Network: Average Caching Variables for linear and quadratic caching cost functions.}
		\label{table}
		\parbox{.45\linewidth}{
			\centering
			\label{tab:sn} 
			\begin{tabular}{c c c} 
				\hline\hline 
				Node & Linear & Quadratic \\ [0.5ex] 
				\hline 
				1 & 1.00   & 0.00  \\ 
				2 & 0.00  & 0.00 \\
				$t_1$ & 1.00 &    1.00  \\
				$t_2$ & 1.00 &    1.00  \\
				$t_3$ & 1.00 &    1.00  \\[1ex] 
				\hline 
			\end{tabular}
			\subcaption{Service Network.} 
		}
	\parbox{.45\linewidth}{
		\centering
		\label{tab:cdn} 
	\begin{tabular}{c c c} 
		\hline\hline 
		Node & Linear & Quadratic \\ [0.5ex] 
		\hline 
		1 & 0.78   & 0.50  \\ 
		2 & 0.80  & 0.58 \\
		3 & 0.73  &  0.54  \\
		4 & 0.92  &  0.47 \\
		5 & 1.00    &     0.00 \\
		6 & 0.98    &     0.00  \\
		$t_1$ & 1.00 &    1.00  \\
		$t_2$ & 1.00 &    1.00  \\[1ex] 
		\hline 
	\end{tabular}
	\subcaption{CDN Network.} 
}
\end{table}

\section{Conclusion}

{ In summary, we considered a natural generalization of the standard coded-multicast problem~\cite{Lun} where a subset of the nodes have local caches. We were interested in settings where the source had no cache, and where there was a need for frame-by-frame encoding. The setting was motivated by delay-sensitive applications as well as the need to update replicas in distributed object storage systems. We introduced a novel communication scheme for the cache-aided multicast problem, by taking advantage of recent results from the coding theory literature on a problem in function updates~\cite{PraMed}. We first provided natural extensions of  the function updates problem and used these extensions to construct the communication scheme for the cache-aided coded multi-cast problem. The scheme had three important features. Firstly, the scheme is applicable to any general network. Secondly, the scheme works along with any linear network code designed for the same network but without any caches. Finally, the scheme is largely decentralized with respect to encoder design to take advantage of caches. The second and third features are important in practice, and they together handle the temporary nature of caches in the network. For example, in a 1000 round scheme if the cache of a certain node is not available after 500 rounds, the current scheme only requires encoding changes at the in-neighbors of node $i$ without any change else where. Further, the overall linear network code also does not need to change because node $i$'s cache is unavailable.  

Given the link and caching costs, we obtained expressions for the overall cost function for multicast problem. The resultant mixed integer optimization problem was relaxed to a convex problem, whose solution was obtained via primal dual methods. Simulation results were provided for practical settings such as CDN and service networks. In general, when compared to the no caching scenario, the benefits are proportional to the number of nodes having caches. For instance, in the service network, this means that it is beneficial to cache at the destinations (peers) than at the edge. 

Two interesting questions remain at the end of this work, and both of these relate to possibly improving the overall cost function for problem. Firstly,  it is unclear if the function updates coding problem variant in Lemma \ref{lem:func_updates_var2} has a better achievability result than what is presented here. The setting of Lemma \ref{lem:func_updates_var2} is MAC variation of the problem in \cite{PraMed}, and this by itself is an interesting coding theory problem, especially  given the importance in network context as identified in this work. Secondly, it is interesting to explore if the linear network code (LNC) itself can be obtained to take advantage of the caches. In the current work, we first fix the the global LNC, and then use local encoding technique (at the in-neighbors) to take advantage of cache.  While this approach provided practical advantages as mentioned above, it might be possible to reduce the overall cost by considering a joint design of the LNC and caching scheme. In this regard, it is an interesting exercise to explore if algebraic framework in \cite{KoetterMedard} can be suitably modified to directly handle the presence of any cache in the network.
}

\bibliographystyle{IEEEtran}
\bibliography{citations}
\appendices
\section{Proof of Theorem \ref{theorem:stability}}
\label{app:convergence}
	To prove the convergence of primal-dual algorithm to a globally optimal solution of the relaxed problem, we use Lyapunov stability theory, and show that the proposed algorithm is globally, asymptotically stable. This proof is based on the proof of Theorem 3.7 of \cite{internetCC}, and Proposition 1 of \cite{Lun}.
	
	Following the equilibrium point $(\widehat{\Sigma},\widehat{K},\widehat{P},\widehat{\Lambda},\widehat{\Gamma})$ satisfying KKT conditions in (\ref{eq:kkt-lag_mu})-(\ref{eq:kkt_deltagamma}), we consider the function given in (\ref{eq:V}) as a candidate for the Lyapunov function. 
	Note that $V(\widehat{\Sigma},\widehat{K},\widehat{P},\widehat{\Lambda},\widehat{\Gamma})=0$. Since $k_{ij}^{(t)}(x)>0$, if $\mu^{(t)}_{i,j}\neq \widehat{\mu}^{(t)}_{i,j}$, we have
	
	$$\int_{\widehat{\mu}^{(t)}_{i,j}}^{\mu^{(t)}_{i,j}} \frac{1}{k_{ij}^{(t)}(x)} (x - \widehat{\mu}^{(t)}_{i,j})d \kappa>0.$$
	
	Similarly, we can extend this argument to the other terms in $V$, hence, we have $V(\Sigma,K,P,\Lambda,\Gamma)>0$ whenever $(\Sigma,K,P,\Lambda,\Gamma)\neq (\widehat{\Sigma},\widehat{K},\widehat{P},\widehat{\Lambda},\widehat{\Gamma})$.
	
	Proceeding to the $\dot{V}$ in (\ref{eq:Vdot}) , let us first prove the following.
	
	\begin{equation}
	\left(-\mu^{(t)}_{i,j}\right)_{\lambda^{(t)}_{i,j}}^+(\lambda^{(t)}_{i,j} - \widehat{\lambda}^{(t)}_{i,j})
	\leq -\mu^{(t)}_{i,j}(\lambda^{(t)}_{i,j} - \widehat{\lambda}^{(t)}_{i,j}). 
	\label{eq:ineq-ldot}
	\end{equation}
	The above inequality is an equality if either $\mu^{(t)}_{i,j}\leq 0$ or $\lambda^{(t)}_{i,j}>0$. On the other hand, when $\mu^{(t)}_{i,j}>0$ and $\lambda^{(t)}_{i,j}\leq 0$, we have $\left(-\mu^{(t)}_{i,j}\right)_{\lambda^{(t)}_{i,j}}^+=0$, and since $\widehat{\lambda}^{(t)}_{i,j}\geq 0$, it follows $-\mu^{(t)}_{i,j}(\lambda^{(t)}_{i,j} - \widehat{\lambda}^{(t)}_{i,j})\geq 0$. Hence, (\ref{eq:ineq-ldot}) holds. Similarly, it can be verified that
	$$\left(-\kappa_i\right)_{\gamma_i^-}^+(\gamma_i^--\widehat{\gamma}_i^-)\leq
	-\kappa_i(\gamma_i^--\widehat{\gamma}_i^-),$$
	$$\left(\kappa_i-1\right)_{\gamma_i^+}^+(\gamma_i^+-\widehat{\gamma}_i^+)
	\leq	(\kappa_i-1)(\gamma_i^+-\widehat{\gamma}_i^+).$$

	 Thus, we get the (a) inequality. By applying KKT conditions (\ref{eq:kkt-lag_mu})-(\ref{eq:kkt_deltagamma}) and noting that

	\begin{IEEEeqnarray}{+rCl+x*}
		p'y &=& \sum\limits_{t\in T}\sum\limits_{i\in\mathcal{N}} p_i^{(t)}\left(\sum\limits_{\{j\in \mathcal{N}_{out}(i)\}}  \mu^{(t)}_{i,j} - \sum\limits_{\{j \in \mathcal{N}_{in}(i)\}}  \mu^{(t)}_{j,i} \right)\nonumber\\
		&=& \sum\limits_{t\in T}\sum\limits_{(i,j)\in \mathcal{E}}
		\mu^{(t)}_{i,j}(p_i^{(t)}-p_j^{(t)}) = q'\mu,
	\end{IEEEeqnarray}
	and further re-arrangement of the terms, equation (b) and then (c) follows. Since $\Psi(\Sigma,K)$ is strictly convex in $\Sigma$, and linear in $K$, $$\left(\triangledown_{\Sigma} \Phi(\widehat{\Sigma},\widehat{K})- \triangledown_{\Sigma} \Psi(\Sigma,K)\right)' (\Sigma-\widehat{\Sigma}) \leq 0,$$
	 $$\left(\triangledown_{K} \Phi(\widehat{\Sigma},\widehat{K})- \triangledown_{K} \Psi(\Sigma,K)\right)' (K-\widehat{K})= 0,$$ 
	 and thus, $\dot{V}\leq - \Lambda' \widehat{\Sigma} - (\mathbf{1}-\widehat{K})'\Gamma^+ - (\widehat{K})'\Gamma^-$, with equality if and only if $\Sigma=\widehat{\Sigma}$. 
	
	Note that, if the initial choice of $\Lambda$, $\Gamma^-$ and $\Gamma^+$ are element-wise non-negative, that is, $\Lambda(0),\Gamma^-(0),\Gamma^+(0)\succeq 0$, it can be verified from the primal-dual algorithm, that $\Lambda(n),\Gamma^-(n),\Gamma^+(n)\succeq 0 $, where $n =0,1,2,\ldots$ represents the algorithm iteration. Assuming $\Lambda,\Gamma^-,\Gamma^+\succeq 0$, it follows that $\dot{V}\leq 0$ since $\widehat{\Sigma}\succeq 0$ and $0\preceq \widehat{K}\preceq 1$. Therefore, the primal-dual algorithm is globally, asymptotically stable, and hence, it converges to a globally optimal solution of the relaxed problem.
	
	\begin{figure*}
		\rule{\textwidth}{1pt}
		\begin{eqnarray}
		&& V(\Sigma,K,P,\Lambda,\Gamma) = \sum\limits_{ t\in T}\biggl\lbrace \sum\limits_{(i,j)\in \mathcal{E}}\left(
		\int_{\widehat{\mu}^{(t)}_{i,j}}^{\mu^{(t)}_{i,j}} \frac{1}{k_{ij}^{(t)}(x)} (x - \widehat{\mu}^{(t)}_{i,j})d x + 
		\int_{\widehat{\lambda}^{(t)}_{ij}}^{\lambda^{(t)}_{i,j}} \frac{1}{m_{i,j}^{(t)}(s)} (s - \widehat{\lambda}^{(t)}_{i,j})d s\right)+\label{eq:V}\\&&
		\sum\limits_{i\in\mathcal{N}}\int_{\widehat{p}^{(t)}_{i}}^{p^{(t)}_{i}} \frac{1}{g_{i}^{(t)}(z)} (z - \widehat{p}^{(t)}_{i})d z\vphantom{\sum\limits_{(i,j)\in \mathcal{E}}}\biggr\rbrace +
		\sum\limits_{i\in\mathcal{N}} \biggl\lbrace  \int_{\widehat{\kappa}_{i}}^{\kappa_{i}} \frac{1}{h_{i}(l)} (l - \widehat{\kappa}_{i})d l+
		 \int_{\widehat{\gamma}^-_{i}}^{\gamma^-_{i}} \frac{1}{\alpha_{i}(v)} (v - \widehat{\gamma}^-_{i})d v+
		 \int_{\widehat{\gamma}^+_{i}}^{\gamma^+_{i}} \frac{1}{\beta_{i}(w)} (w - \widehat{\gamma}^+_{i})d w \biggr\rbrace.\nonumber
		\end{eqnarray}
		\rule{\textwidth}{1pt}
		\begin{eqnarray}
		&\dot{V}& = \sum\limits_{ t\in T}\biggl\lbrace \sum\limits_{(i,j)\in \mathcal{E}}\left(
		\left(-\frac{\partial \Psi(\Sigma,K)}{\partial \mu^{(t)}_{i,j}} - q^{(t)}_{i,j}+ \lambda^{(t)}_{i,j}\right)\cdot (\mu^{(t)}_{i,j} - \widehat{\mu}^{(t)}_{i,j}) + 
		\left(-\mu^{(t)}_{i,j}\right)_{\lambda^{(t)}_{i,j}}^+(\lambda^{(t)}_{i,j} - \widehat{\lambda}^{(t)}_{i,j})\right)\nonumber \\ &+&
		\sum\limits_{i\in\mathcal{N}}\left(y^{(t)}_{i}-\theta^{(t)}_{i}\right) (p^{(t)}_{i} - \widehat{p}^{(t)}_{i})\vphantom{\sum\limits_{(i,j)\in \mathcal{E}}}\biggr\rbrace+
		\sum\limits_{i\in\mathcal{N}}\biggl\lbrace \left(-\frac{\partial \Psi(\Sigma,K)}{\partial \kappa_i} -\gamma_i^++\gamma_i^-\right)(\kappa_i - \widehat{\kappa}_i)\nonumber\\&+&
		\left(-\kappa_i\right)_{\gamma_i^-}^+(\gamma_i^--\widehat{\gamma}_i^-)+
		\left(\kappa_i-1\right)_{\gamma_i^+}^+(\gamma_i^+-\widehat{\gamma}_i^+)
		\biggr\rbrace.
		\label{eq:Vdot}
		\end{eqnarray}
		\rule{\textwidth}{1pt}
		\begin{eqnarray*}
			\dot{V} &\stackrel{(a)}{\leq} & \sum\limits_{ t\in T}\biggl\lbrace \sum\limits_{(i,j)\in \mathcal{E}}\left(
			\left(-\frac{\partial \Psi(\Sigma,K)}{\partial \mu^{(t)}_{i,j}} - q^{(t)}_{i,j}+ \lambda^{(t)}_{i,j}\right)\cdot (\mu^{(t)}_{i,j} - \widehat{\mu}^{(t)}_{i,j}) -\mu^{(t)}_{i,j}(\lambda^{(t)}_{i,j} - \widehat{\lambda}^{(t)}_{i,j})\right)\nonumber \\ &+&
			\sum\limits_{i\in\mathcal{N}}\left(y^{(t)}_{i}-\theta^{(t)}_{i}\right) (p^{(t)}_{i} - \widehat{p}^{(t)}_{i})\vphantom{\sum\limits_{(i,j)\in \mathcal{E}}}\biggr\rbrace+
			\sum\limits_{i\in\mathcal{N}}\biggl\lbrace \left(-\frac{\partial \Psi(\Sigma,K)}{\partial \kappa_i} -\gamma_i^++\gamma_i^-\right)(\kappa_i - \widehat{\kappa}_i)\nonumber\\&-&
			\kappa_i(\gamma_i^--\widehat{\gamma}_i^-)+
			(\kappa_i-1)(\gamma_i^+-\widehat{\gamma}_i^+)
			\biggr\rbrace
			  \stackrel{(b)}{=} (\widehat{q}-q)'(\mu-\widehat{\mu})-(\widehat{p}-p)'(y-\widehat{y})\\
			&+& \sum\limits_{ t\in T}\biggl\lbrace \sum\limits_{(i,j)\in \mathcal{E}}\left(
			\left(\frac{\partial \Psi(\widehat{\Sigma},\widehat{K})}{\partial \mu^{(t)}_{i,j}}-\frac{\partial \Psi(\Sigma,K)}{\partial \mu^{(t)}_{i,j}}\right) (\mu^{(t)}_{i,j} - \widehat{\mu}^{(t)}_{i,j}) -\widehat{\mu}^{(t)}_{i,j}(\lambda^{(t)}_{i,j}-\widehat{\lambda}^{(t)}_{i,j})\right)\biggr\rbrace
			\\&+&
			\sum\limits_{i\in\mathcal{N}}\biggl\lbrace \left(\frac{\partial \Psi(\widehat{\Sigma},\widehat{K})}{\partial \kappa_i}-\frac{\partial \Psi(\Sigma,K)}{\partial \kappa_i} \right)(\kappa_i - \widehat{\kappa}_i)-
			\widehat{\kappa}_i(\gamma_i^--\widehat{\gamma}_i^-)+
			(1-\widehat{\kappa}_i)(\widehat{\gamma}_i^+ - \gamma_i^+)
			\biggr\rbrace\\
			&\stackrel{(c)}{=}&
			 \left(\triangledown_{\Sigma} \Phi(\widehat{\Sigma},\widehat{K})- \triangledown_{\Sigma} \Psi(\Sigma,K)\right)' (\Sigma-\widehat{\Sigma})+ \left(\triangledown_{K} \Phi(\widehat{\Sigma},\widehat{K})- \triangledown_{K} \Psi(\Sigma,K)\right)' (K-\widehat{K}) \\&-& \Lambda' \widehat{\Sigma} - (\mathbf{1}-\widehat{K})'\Gamma^+ - (\widehat{K})'\Gamma^- \stackrel{(d)}{\leq} - \Lambda' \widehat{\Sigma} - (\mathbf{1}-\widehat{K})'\Gamma^+ - (\widehat{K})'\Gamma^-.
			\label{eq:Vdot2} 
		\end{eqnarray*}
	\end{figure*}
	
\end{document}